\renewcommand\footnotetextcopyrightpermission[1]{}
\newtheorem{theorem}{Theorem}
\newtheorem{lemma}{Lemma}
\newtheorem{definition}{Definition}
\definecolor{black}{RGB}{0,0,0}
\definecolor{gray}{RGB}{102,102,102}        
\definecolor{function}{RGB}{0,102,153}      
\definecolor{lightgreen}{RGB}{102,153,0}    
\definecolor{lightlightgreen}{RGB}{152,193,50}    
\definecolor{bluegreen}{RGB}{51,153,126}    
\definecolor{magenta}{RGB}{217,74,122}  
\definecolor{orange}{RGB}{226,102,26}       
\definecolor{purple}{RGB}{125,71,147}       
\definecolor{green}{RGB}{113,138,98}        
\definecolor{tomato}{RGB}{255,99,71}  
\definecolor{lightred}{RGB}{255,160,131}  
\lstdefinelanguage{parameterized}{
  firstnumber=1,
  xleftmargin=1.5em,
  numberstyle=\tiny\color{black},
  tabsize=2,
  numbers=left,
  morekeywords = [3]{require,while,if,then,else,do,done,wait,until,end,for,return,returns,upon,from,to,is,in},
  morekeywords = [4]{pragma,function,contract,new,true,false,null,and,or},
  morekeywords = [5]{bench,unlockAccount,createBatch,start_new_consensus,peek,
  propose,execute_transaction,HttpProvider,parse,on,catch,readFileSync,sendSignedTransaction,
  consensus_propose,mvc_propose,break,decide,poll,broadcast,deliver,add,greet,Hello,
  execute,sendTransaction,updateBlockState,executeTx,persist,deserialize,isValid,add},
  morekeywords = [6]{var,public,bytes32,bool,byte,+,=,:=,.,;,,,-,!,=,~,>,<,==,solidity},
  morekeywords = [7]{reliable_broadcast,binary_consensus,consensus},
  keywordstyle = [3]\color{bluegreen},
  keywordstyle = [4]\color{lightgreen},
  keywordstyle = [5]\color{magenta},
  keywordstyle = [6]\color{orange},
  keywordstyle = [7]\color{purple},
  sensitive = true,
  morecomment = [l][\color{gray}]{//},
  morecomment = [s][\color{gray}]{/*}{*/},
  morecomment = [s][\color{gray}]{/**}{*/},
  morestring = [b][\color{purple}]",
  morestring = [b][\color{purple}]',
  literate=
  {á}{{\'a}}1 {é}{{\'e}}1 {í}{{\'i}}1 {ó}{{\'o}}1 {ú}{{\'u}}1
  {Á}{{\'A}}1 {É}{{\'E}}1 {Í}{{\'I}}1 {Ó}{{\'O}}1 {Ú}{{\'U}}1
  {à}{{\`a}}1 {è}{{\`e}}1 {ì}{{\`i}}1 {ò}{{\`o}}1 {ù}{{\`u}}1
  {À}{{\`A}}1 {È}{{\'E}}1 {Ì}{{\`I}}1 {Ò}{{\`O}}1 {Ù}{{\`U}}1
  {ä}{{\"a}}1 {ë}{{\"e}}1 {ï}{{\"i}}1 {ö}{{\"o}}1 {ü}{{\"u}}1
  {Ä}{{\"A}}1 {Ë}{{\"E}}1 {Ï}{{\"I}}1 {Ö}{{\"O}}1 {Ü}{{\"U}}1
  {â}{{\^a}}1 {ê}{{\^e}}1 {î}{{\^i}}1 {ô}{{\^o}}1 {û}{{\^u}}1
  {Â}{{\^A}}1 {Ê}{{\^E}}1 {Î}{{\^I}}1 {Ô}{{\^O}}1 {Û}{{\^U}}1
  {Ã}{{\~A}}1 {ã}{{\~a}}1 {Õ}{{\~O}}1 {õ}{{\~o}}1
  {œ}{{\oe}}1 {Œ}{{\OE}}1 {æ}{{\ae}}1 {Æ}{{\AE}}1 {ß}{{\ss}}1
  {?}{{\H{u}}}1 {?}{{\H{U}}}1 {?}{{\H{o}}}1 {?}{{\H{O}}}1
  {ç}{{\c c}}1 {Ç}{{\c C}}1 {ø}{{\o}}1 {å}{{\r a}}1 {Å}{{\r A}}1
  {€}{{\euro}}1 {£}{{\pounds}}1 {«}{{\guillemotleft}}1
  {»}{{\guillemotright}}1 {ñ}{{\~n}}1 {Ñ}{{\~N}}1 {¿}{{?`}}1
}
\newcommand{\cref}[1]{{\S\ref{#1}}}
\newcommand{\solution}{SRBB\xspace}
\newcommand{\solutionlong}{Smart Red Belly Blockchain\xspace}
\newcommand{\middleware}{middleware\xspace}
\newcommand{\remove}[1]{}
\newcommand\TextSize{\fontsize{8.5}{9.5}\selectfont}
\newcommand*\ttt{\TextSize\ttfamily\SetTracking{encoding=*}{-60}\lsstyle}
\author{Deepal Tennakoon and Vincent Gramoli}
\date{}
\begin{document}
\title{\solutionlong: Enhanced Transaction Management for Decentralized Applications}

\begin{abstract}

Decentralized Applications (DApps) have seen widespread use in the recent past driving the world towards a new decentralized version of the web known as Web3.0. DApp-supported blockchains like Ethereum have largely been responsible for this drive supporting the largest eco-system of DApps. Although the low performance provided by Ethereum has been a major impediment to realizing a decentralized web, several high-performance blockchains have been introduced recently to bridge this gap.
Most of these blockchains rely on consensus optimizations.
Only a few enhance other parts of the blockchain protocol that involves transaction management: the validation of transactions, broadcast of transactions, encapsulation and dissemination of blocks with transactions, re-validation and execution of transactions in blocks, storage of blocks, and confirmation of transaction commits to senders upon request.

In this paper, we enhance transaction management by introducing a novel transaction validation reduction and a per sub-block processing to optimize the block storage. We empirically show the performance improvements gained by our enhanced transaction management in the \solutionlong (\solution) VM we develop. Finally, we integrate our \solution VM to an already optimized consensus from a known blockchain to develop the \solutionlong. Our results show that \solution achieves a peak throughput of 4000\,TPS and an average throughput of 2000\,TPS on 200 nodes spread across 5 continents. \solution outperforms 6 other blockchains when running the exchange DApp featuring a real workload trace taken from \textsc{Nasdaq}.
\end{abstract}
\maketitle

\section{Introduction}
As the number of misuses of Internet data grows due to the centralization of the Web~\cite{MSH16}, so does the need for decentralized \middleware rewarding individuals for sharing data.
This centralization has severe drawbacks: it exposes data to leaks and hacks~\cite{Fac18} and
it facilitates user manipulation~\cite{Pra18}.
Although Ethereum~\cite{wood2014ethereum} promised to decentralize the Web by offering decentralized applications (DApps) written as smart contracts, it remains inherently slow. 
The crux of the problem stems from its transaction management~\cite{ponnapalli2021rainblock}.

The classic way of managing transactions (i.e., transaction management) in Ethereum~\cite{wood2014ethereum} and Bitcoin~\cite{Nak08} consists of miners validating each transaction upon reception, disseminating a transaction to other validators,  encapsulating transactions in a block, disseminating this block to other validators, re-validating and executing the block transactions~\cite{wood2014ethereum, libra21}, storing blocks and finally the sender confirming that the transaction is committed.
This process has major shortcomings such as every validator having to validate each transaction twice.
This is the reason, one needs an \emph{enhanced transaction management} to speedup blockchains supporting DApps.

While various efforts focus on consensus optimizations~\cite{YMR19,CNG21,CMMP21,RML21},
fewer efforts focus on enhancing the transaction management~\cite{yakovenko2018solana, ponnapalli2021rainblock, https://doi.org/10.48550/arxiv.2203.06871} in blockchains. Some of these efforts trade-off security for improved performance~\cite{yakovenko2018solana}~\cite{ponnapalli2021rainblock} while others work only in test environments~\cite{273865}.


In this paper, we enhance transaction management by introducing a novel transaction validation reduction and a per sub-block storage optimization for storage of blocks.

First, we divide the number of necessary validations by two by not requiring all validators to eagerly validate every transaction when it is submitted but by requiring all miners to only lazily validate this transaction upon execution.
By validating twice less we reduce the validation time of Ethereum by up to 48\% when the number of validators is large. Second, we process and store sub-blocks in a block yielding a throughput improvement of 39.7\%.

To illustrate the applicability of our optimizations, we implemented our enhanced transaction management with the consensus of the Red Belly Blockchain~\cite{CNG21}, a fast blockchain that does not support DApps (i.e. smart contracts), to obtain the \solutionlong (\solution)
that supports the largest ecosystem of DApps.
We deploy \solution in a geo-distributed setting of 200 machines located across 5 continents.
We compare its performance on the \textsc{Nasdaq} application of the Diablo benchmark~\cite{BGG21}, featuring the
real trace of Apple, Amazon, Facebook, Microsoft, and Google stock trades.
Not only does \solution commit all transactions but it outperforms Algorand~\cite{GHM17}, Avalanche~\cite{TR18}, Ethereum~\cite{wood2014ethereum}, Libra-Diem~\cite{Ams20}, Quorum~\cite{qwp} and Solana~\cite{Yak21}.
Finally, \solution achieves a peak throughput of 4000\,TPS and an
average throughput of \textasciitilde{}2000\,TPS. In summary, our technical contributions include the following:
\begin{itemize}
\item To reduce the CPU usage of the EVM, we reduced transaction validation time by 48\% without weakening security (Section~\ref{ssec:tx-validation}).

\item To reduce the I/O delays of the EVM, we broke large blocks into sub-blocks to be processed one at a time (Section~\ref{superblockopt-section}), hence improving throughput by 39.7\%.

\item To illustrate the benefit of these optimizations in a realistic setting, we integrated this enhanced transaction management into a blockchain combining the optimized  EVM (i.e., \solution VM) with the consensus of the Red Belly Blockchain~\cite{CNG21} to obtain \solutionlong. Not only does \solution show a $3.2\times$ speedup in throughput over the naive blockchain with the EVM and the Red Belly Blockchain consensus simply plugged together (Section~\ref{sec:cumulativeperf}), but it also outperforms 6 state-of-the-art blockchains when running a realistic exchange DApp (Section~\ref{section:comparison}).
\end{itemize}

In the remainder of the paper, we present our background and assumptions (Section~\ref{sec:background}). We then present \solution transaction management along with our optimizations (Section~\ref{sec:solution}). Next, we evaluate our transaction management enhancements and the resulting \solution against other recent blockchains (Section~\ref{sec:evaluationSRBB}).
Finally, we present the related work (Section~\ref{sec:rw}) and conclude (Section~\ref{sec:conclusion}). The consensus protocol and the proof of correctness are presented in Appendix~\ref{line:proof}.

\section{Background and Assumptions}
\label{sec:background}

In this section, we present our background. Next, we define the blockchain problem and present our assumptions.

\subsection{Ethereum background}
Ethereum~\cite{wood2014ethereum} features the Ethereum Virtual Machine (EVM) that was proposed in part to cope with the limited expressiveness of Bitcoin~\cite{Nak08} and to execute DApps written in a Turing complete programming language as smart contracts. Go Ethereum, or {\ttt geth} for short, is the most deployed Ethereum implementation~\cite{gethstats}.

\paragraph{Ethereum nodes, accounts and transactions}
Ethereum \emph{nodes} are of two main types, namely client nodes and validator nodes. Client nodes send read and write requests to the blockchain while a \emph{validator} node (i.e., miner) services these requests. Note that the term ``client'' is used to define implementations of Ethereum (e.g., {\ttt geth} client) by the Ethereum community but we identify a ``client'' solely as a sender of requests to the blockchain. More precisely, validator nodes perform two tasks. Firstly, the validator solves consensus to agree upon the client write requests and executes those requests. Secondly, the validator services client read requests. Clients can have \emph{accounts} in Ethereum. An Ethereum account contains a nonce (the number of write requests sent from this account), an address (the identifier of the account derived from the associated public key), and a balance (the amount of funds in the account). \emph{Transactions} are write requests sent by client nodes to the blockchain. They are of three main types in Ethereum: native payments that transfer funds between ethereum accounts, smart contract deployments that upload smart contracts to the blockchain, and smart contract invocations that invoke the functions of smart contracts. The transaction execution costs some fee to the client. This fee was originally calculated from gas -- the amount of computational work required to execute a transaction and the gas price -- a monetary value expressed in a cryptocurrency for one unit of gas~\cite{wood2014ethereum}. Thus, the fee calculation was measured by gas $\times$ gas price~\cite{wood2014ethereum}, even though a gas tip has recently been integrated into the fee calculation.


\paragraph{The redundant validations of Ethereum}
\label{sec:bgrd}

In order to check that a request (or transaction)
is valid, all of the {\ttt geth} validators must validate twice each transaction prior to execution:
\begin{itemize}
\item {\bf Eager validation:} This validation occurs upon reception of a new client transaction and verifies the transaction size does not exceed 32 kilobytes, the value transferred is non-negative, the gas of a transaction does not exceed the block gas limit, the transaction is properly signed, the transaction nonce value is in order, the sender account has sufficient coins and the gas amount is sufficient to execute the transaction. Because of this transaction validation, the risk of denial-of-service (DoS) attacks is reduced as an invalid transaction is dropped early. If the transaction is valid, {\ttt geth} propagates it to other nodes.
\item {\bf Lazy validation:} This validation occurs before the transactions in a block are executed and simply checks the nonce and whether there is enough gas for execution.
This lazy validation is necessary to guarantee that transactions in a newly received block are indeed valid. The lazy validation is thus less time-consuming in {\ttt geth} than the eager validation. This is why we focus here on reducing the number of eager validations.
\end{itemize}
Note that this double validation is overly conservative because each to-be-executed transaction of {\ttt geth} is validated twice by each server. This is unnecessary as an invalid transaction coming from a malicious node will either be dropped by the lazy validation before its execution or fail to execute. It is also interesting to note that in a system with few malicious replicas, there is no need for all nodes to validate all transactions twice. We explain in Section~\ref{ssec:tx-validation} how, without reducing security, we reduce the number $k$ of eager validations per validator node down to $k/n$ to scale to a large system of size $n$.

\paragraph{Transaction blocks and receipts}
A transaction block consists of a list of transactions. In the EVM, transaction blocks are executed after a block is followed by a branch of 5 additional blocks (i.e., this is known as block confirmation). After execution, the blocks are inserted into a chain where each block is linked to the prior block forming a chain of blocks, hence the name ``blockchain''. Subsequently, the blocks are written to the LevelDB (a key-value data store). Once the data in the LevelDB exceeds a threshold of memory, the blocks stored in memory are flushed to the disk using disk I/O. The transaction receipts record the result of executing a transaction. It contains the status (a boolean value indicating whether or not the transaction execution was successful), a block hash (the hash of the block containing the transaction), a block number (the index of the block containing the transaction), and transaction-specific fields. These include the transaction hash, sender address and receiver address (a null value if it is a smart contract deployment). Next, the receipt contains additional fields such as the contract address (the address of the smart contract created if it is a contract deployment), cumulative gas used (the total gas consumed by the blockchain when the transaction was executed), the gas used (the amount of gas used by the transaction) and an array of logs (these logs contain events generated by the transaction). 

\paragraph{Transaction management}
The transaction management is the process transactions go through when a transaction is submitted to the blockchain by a client until the client confirms the transaction has been committed. Therefore, the typical Ethereum transaction management includes the reception of transactions at the blockchain, the validation of transactions, the broadcast of transactions to the network, the encapsulation of transactions into a block, the dissemination of the blocks among validators, the re-validation of transactions in a block, the execution of transactions in a block, the storage of blocks and finally the retrieval of transaction data by the client to verify that transactions are committed.
 
\paragraph{Ethereum tries}
Ethereum transactions executed on validator nodes update the Ethereum state -- a mapping of account addresses and account state (i.e., account balances, account nonces and smart contract variable states). The Ethereum state, transactions and transaction receipts are stored in a Merkel Patricia Trie (MPT) data structure~\cite{wood2014ethereum} first held in memory and then flushed to disk once a threshold of memory exceeds. The roots of these three tries (i.e., state, transaction and receipt tries) reside in the header fields of transaction blocks as the state root, transaction root and transaction receipt root. Therefore, each block points to its respective snapshot of the three tries allowing anyone to traverse through the blocks to view the state, transactions and receipts history.

\paragraph{Transaction confirmation}
A transaction is confirmed from the transaction issuer's (i.e., sending client) standpoint when the issuer verifies the transaction as committed. We term this process as transaction confirmation.
Usually, any client including the transaction issuer knows a particular transaction is committed by fetching the transaction receipt (i.e., receipt polling) from the blockchain or by fetching a block containing the transaction (i.e., block polling). Therefore, when measuring latency and throughput of blockchains, evaluation tools~\cite{BGG21}~\cite{DWC17} consider the end-to-end measurement between the time when the transaction is sent from the issuer to the time the transaction receipt or the block containing the transaction is received back at the issuer.

\subsection{Assumptions and blockchain problem}
Our network consists of validator nodes that are well-connected. We assume \emph{partially synchronous} communication~\cite{DLS88}. 
We assume, out of $n$ \solution validator nodes, at most $f$ are \emph{byzantine} and can act arbitrarily where $f < n/3$. As outlined in Section~\ref{ssec:open}, \solution requires nodes to deposit stake to become a validator. The use of PoS (Proof-of-Stake) provides a form of Sybil resistance as it makes it expensive for a user to join the network with different identities as multiple validators. Also mentioned in Section~\ref{ssec:open}, \solution features a reconfiguration protocol to cope with bribery.

\label{ssec:bc}
We refer to the blockchain problem as the problem of ensuring both the safety and liveness properties that were defined in the literature by Garay et al.~\cite{GKL15} and restated more recently by Chan et al.~\cite{CS20}, and a classic validity property~\cite{CNG21}.
\begin{definition}[The Blockchain Problem]\label{def:blockchain}
The \emph{blockchain problem} is to ensure that a distributed set of validator nodes
maintain a sequence of transaction blocks such that the following properties hold:
\begin{itemize}
\item  \emph{Liveness:} if a correct validator node receives a transaction, then this transaction will eventually be reliably stored in the block sequence of all correct validator nodes.
\item \emph{Safety:} the two chains of blocks maintained locally by two correct validator nodes are either identical or one is a prefix of the other. 

\item \emph{Validity:} each block appended to the blockchain of each correct validator node is a set of valid transactions (non-conflicting well-formed transactions that are correctly signed by its issuer).
\end{itemize}
\end{definition}
The safety property does not require correct validator nodes to share the same copy, simply because one replica may already have received the latest block before another receives it.
Note that, as in classic definitions~\cite{GKL15,CS20}, the liveness property does not guarantee that a client transaction is included in the blockchain: if a client sends its transaction request exclusively to byzantine nodes then byzantine nodes may decide to ignore it.

\section{\solution with Enhanced Transaction Management}
\label{sec:solution}
In this section, we present \solution, our blockchain compatible with the largest ecosystem of DApps. 
It is optimally resilient against byzantine failures. Firstly, we present the architecture of \solution (Figure~\ref{fig:architecture}) and then describe its functionality through the transaction life cycle. Subsequently, we present our transaction management enhancements for the \solution VM under Sections~\ref{ssec:tx-validation}, \ref{superblockopt-section}. Finally, we discuss, the membership of \solution.

\begin{figure}[t]
\begin{center}
\includegraphics[scale=0.5]{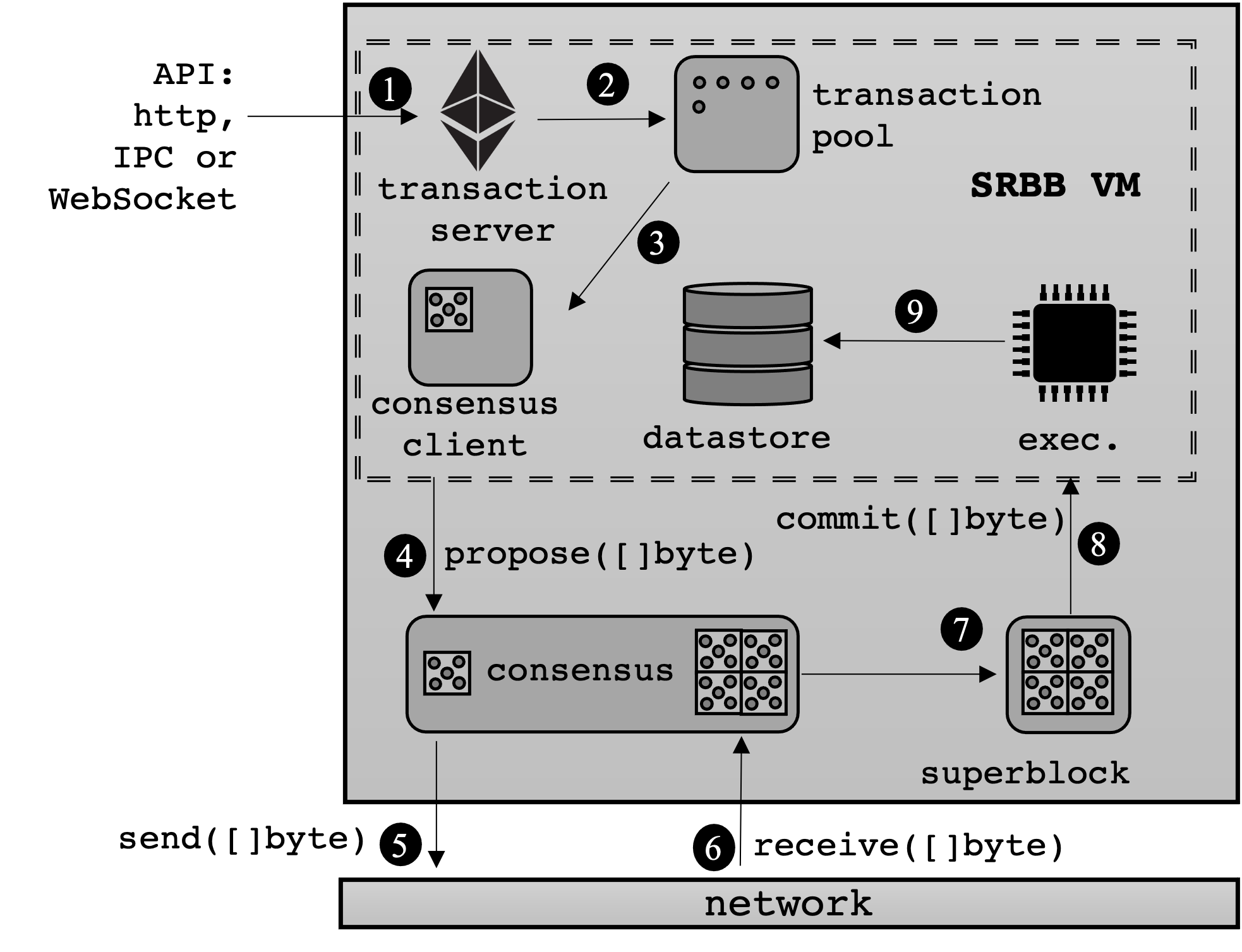}
\caption{
The architecture of a \solution node. The \solution VM is built from Geth enhancing its transaction management. The consensus is the Red Belly Blockchain's consensus~\cite{CNG21}. \ding{202} A client sends a transaction to some replica(s), 
\ding{203}
at each replica, the transaction server receives the transactions and sends them to the transaction pool that validates transactions and
\ding{204} sends a block to the consensus client. \ding{205} The consensus client $\lit{propose}$s it to the consensus protocol. \ding{207} When the consensus outputs some acceptable blocks, \ding{208} all of these blocks are combined into a superblock and sent for execution. Each block in the superblock is validated and executed\ding{209}. Subsequently, the block is inserted into a chain and stored in the data store\ding{210}.
\label{fig:architecture}
}
\end{center}
\end{figure}



\subsection{The transaction lifecyle}\label{sec:lifecycle}

The lifecycle of a transaction goes through these subsequent stages: 
\begin{itemize}
\setlength{\itemindent}{.56in}
\item[{\bf 1. Reception.}] 
The client creates a properly signed transaction and sends it to at least one \solution node \ding{202}. Once a request containing the signed transaction is received \ding{203} by the transaction server of the \solution node, the transaction in the request is submitted to the transaction pool. In the transaction pool, the eager validation (\ref{sec:bgrd}) starts. 
If the validation fails, the transaction is discarded.
If the validation succeeds, the transactions are kept in the transaction pool. Unlike in Ethereum where the transaction would be propagated 
to all miners increasing the number of eager validations, \solution simply proposes it to the consensus as follows:
If the number of successfully validated transactions in the transaction pool reaches a threshold, then the transaction pool creates a new proposed block with a number (defined by the threshold) of transactions from the pool.
It serializes and sends the proposed block to the consensus client \ding{204}. 

\item[{\bf 2. Consensus.}] 
Once the consensus client receives a proposed block, it sends the corresponding byte array to the consensus protocol by invoking the {\ttt propose([]byte)} method \ding{205}. The consensus sends \ding{206} and receives \ding{207} proposals to reach agreement. Subsequently, the consensus creates a superblock \ding{208} with all acceptable blocks/proposals (The Consensus Protocol, line~\ref{line:superblock}) and sends this superblock to the \solution VM by invoking the {\ttt commit([]byte)} method \ding{209}.

\setlength{\itemindent}{.43in}
\item[{\bf 3. Commit.}] 
When the superblock is received by the \solution VM \ding{209} the following execution process follows: a block is taken at a time from the superblock and deserialized using JSON unmarshalling. Then the \solution VM does the lazy validation (\ref{sec:bgrd}) of each transaction in the block. Note that as opposed to the eager validation, all \solution nodes execute the lazy validation before executing a transaction. Yet it does not prevent \solution from scaling to hundreds of nodes (\ref{sec:geodistributed}). After lazy validation, the transactions in the block are executed and the state, transaction, and transaction receipt tries are updated in memory adding the root of these tries to the block header. The block is then inserted/written to the chain in the data store with a pointer to the previous block \ding{210}. Next, the \solution VM follows the same procedure to process the subsequent block in the superblock until the entire superblock is committed. Note that, at some point in time, when a threshold of memory is reached the data store held in memory is flushed to the disk.
\end{itemize}

\subsection{Reducing the transaction validations}\label{ssec:tx-validation}

As opposed to each Ethereum validator node that validates 
eagerly and lazily each of the $k$ transactions of the system, each of the $n$ \solution nodes eagerly validates on average $k/n$ transactions.
Specifically, only one \solution node needs to eagerly validate each transaction: the first node receiving the transaction validates it but does not propagate it to other nodes and simply proposes it to the consensus. 
As a result, \solution limits the redundant validations reducing the CPU overhead, which improves performance. More precisely, if the number of \solution nodes is $n$, then each node does $1+1/n$ validations per transaction on average (one lazy validation + $1/n$ eager validation) compared to the two validations needed in {\ttt geth}. As $n$ tends to infinity, \solution nodes validate on average half of what {\ttt geth} nodes validate leading to a performance increase of 48\% (Section~\ref{sec:vreduction}).
In the worst case, where all clients send their transactions to $f+1 = n/3$ \solution nodes simultaneously, then each node will still eagerly validate only $k/3$ transactions.

Note that, as a result of our optimization, a byzantine validator node could propose transactions to the consensus without validating them eagerly, in this case, two things can happen: (i) The transaction is discarded at the lazy validation if invalid (ii) The validator node attempts to execute the invalid transaction, fails at execution and reverses the state to what it was. Either way, there is no impact on the safety of the blockchain. This is also not a DoS vulnerability of \solution, as even a byzantine node in Ethereum can propagate invalid transactions to all nodes, forcing unnecessary eager-validations.

\subsection{Per sub-block processing}
\label{superblockopt-section}
Since our consensus system is fast, it creates and delivers superblocks at high frequency through the commit channel to the \solution VM. As {\ttt geth} does not expect to receive blocks at such a high frequency, it raises an exception outlining that consecutive block timestamps are identical, which never happens in a normal execution of Ethereum.
This equality arose because  {\ttt geth} encodes the timestamp of each block as {\ttt uint64}, not leaving enough space for encoding time with sufficient precision. Typically {\ttt geth} reports an error when consecutive timestamps are identical, due to a strict check that compares the parent block timestamp to the current block timestamp:
{\ttt header.Time < parent.Time}.

We changed the original check to {\ttt header.Time <= parent}, which allowed for fast-paced executions of consecutive blocks. At each index of the blockchain, the \solution node typically executes many more transactions as part of function {\ttt execute\_transaction} (lines~\ref{line:exec-tx-start}--\ref{line:exec-tx-end}) than Ethereum. 
This is due to a single consensus instance outputting a superblock through the {\ttt commitChan} channel which contains potentially as many blocks as \solution nodes (line~\ref{line:commit-chan}).
\begin{lstlisting}[language=parameterized,basicstyle=\LSTfont,escapechar = ?,escapeinside={(*}{*)},frame = single]
execute_transaction: (*\label{line:exec-tx-start}*)
  // for each superblock received 
  for superblock in node.commitChan do (*\label{line:commit-chan}*)
    vtxs := (*$\emptyset$*) // set of valid transactions
    for block in superblock do // each block (*\label{line:block-loop-start}*)
      txs := node.txm.deserialize(block) // get txs
      for tx in txs:
        if isValid(tx): vtxs.add(tx) // lazy validation (*\label{line:validate}*)
      // set the corresponding block state and order txs
      updateBlockState() (*\label{line:update-block-state}*)
      for tx in vtxs do // for each valid tx...
        executeTx(tx) // ...execute it
      done
      persist(vtxs) // persist valid txs to disk (*\label{line:persist}*)
      vtxs := (*$\emptyset$*)
    done  (*\label{line:block-loop-end}*)
  done (*\label{line:exec-tx-end}*)
\end{lstlisting}

Due to this excessive execution of transactions, the \solution nodes consume high CPU.
Typically, high CPU usage slows down the processing of transactions in the \solution node, which results in the increase of pending transactions in the transaction pool stored in-memory increasing the memory usage.
Due to high memory usage, the garbage collection gets activated to gain memory space. Also, when the memory usage is high the \solution node flushes Ethereum tries to disk to save memory causing significant I/O. This further slows down the \solution VM.

As a solution, we optimized the \solution VM to fully process one proposed block (i.e. sub-block) of the superblock at a time allowing it to alternate frequently between CPU-intensive (verifying signatures and transaction executions) and memory-intensive (state trie write, receipt/transaction tries writes) and IO-intensive tasks (flushing to LevelDB). This resulted in a throughput improvement of 39.7\% (Section~\ref{ssec:store}).

\subsection{Proof-of-stake and membership change}\label{ssec:open}

\solution is an open blockchain that does not require permission for any node to join the network. An \solution node can play two roles, either as a client that sends transactions and reads the state of the blockchain or as a validator node that participates in consensus, executes transactions and keeps a full state of the ledger to service clients. To be a validator, a \solution node must deposit some tokens to a reconfiguration smart contract as stake after which the smart contract outputs a random committee of \solution nodes based on stake to service client requests similar to a sortition~\cite{GHM17}. The \solution nodes listen to this smart contract committee output, which is a smart contract event with a set of IDs (static IP addresses) grouped into a committee. Then the
\solution validator nodes that are part of the committee receive a reward as an incentive for deciding blocks. After a specific time (i.e., an epoch), the validator \solution nodes in a committee invoke a function in the reconfiguration smart contract triggering a committee change similar to the way the initial committee was setup. The use of stake helps mitigate Sybil attacks while committee rotation helps against bribery attacks from slowly-adaptive adversaries. A detailed implementation of the membership, reconfiguration and block reward incentives is out of the scope of this paper and we leave it as future work.

\section{Evaluation}
\label{sec:evaluationSRBB}
In this section, we present our evaluation of \solution. In summary, our transaction management enhancements result in \solution reaching a peak throughput of \textasciitilde{}4000\,TPS and an average throughput of 2000\,TPS for 200 nodes located in 5 continents, and outperforming 6 state-of-the-art blockchains when executing the Diablo benchmark DApp with the real \textsc{Nasdaq} workload~\cite{BGG21}.
Then, we show the overall performance improvement when we combine our optimizations (Section~\ref{sec:cumulativeperf}). Next, we evaluate the scalability of our blockchain (Section~\ref{sec:geodistributed}). Finally, we compare \solution with state-of-the-art recent high performance blockchains (Section~\ref{section:comparison}).

Our evaluation focuses on the latency and throughput performance metrics. The latency of a transaction is the elapsed time between the transaction send time and the transaction confirmation time as seen by the sending client. The throughput is the number of transactions committed per second as observed by the client (recall that \solution ensures instant finality so the transaction is committed as soon as it is stored in reliable storage). To display throughput time series (Figures~\ref{fig:optimized}, \ref{fig:srate} and \ref{fig:gafam}) and avoid the sawtooth effect produced by blockchains appending transactions in blocks, we
averaged the throughput at each second as the average throughput observed within a sliding window of three seconds.
Note that, we evaluate all blockchains using the Diablo~\cite{BGG21} tool that evaluates blockchains against specified workloads by sending pre-signed transactions. For geo-distributed experiments, we use 10 AWS regions spanning 5 continents. Namely: Bahrain, Cape Town, Milan, Mumbai, N. Virginia, Ohio, Oregon, Stockholm, Sydney, and Tokyo. For all benchmarks, we use the AWS family of c5 EC2 instances.

\label{line:eval}

\subsection{Effects of \solution per sub-block processing}\label{ssec:store}
We compared the naive method which executes transactions in an entire superblock and commits them, with an optimized method of processing each sub-block in a superblock at a time (Section~\ref{superblockopt-section}). Note that, to form a superblock we require \solution to execute more than one node as it requires consensus from multiple nodes to deliver a superblock to the \solution VM. Since a BFT consensus requires at least four nodes, we evaluated the \solution superblock processing on a c5.2xlarge 4 node setup with 15000 transactions.

\begin{figure}[t]
	\centering
	\includegraphics[scale=0.29,clip=true,viewport=20 0 900 220]{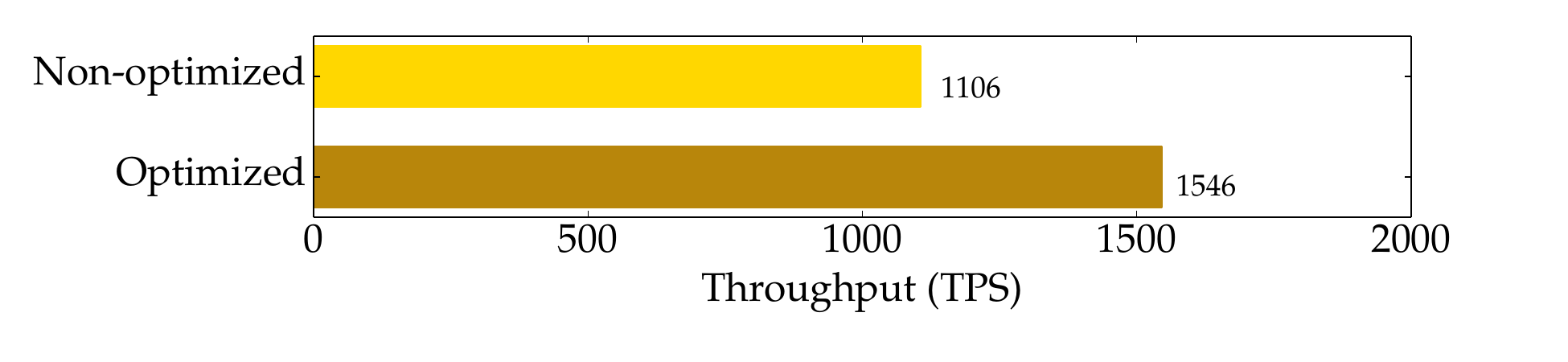}
	\caption{Performance difference when processing each block of a superblock at a time (optimized) and when processing the entire superblock at once (non-optimized)}
	\label{fig:superblock}
\end{figure}

Figure~\ref{fig:superblock} compares the performance obtained with \solution (superblock optimized) and with the naive approach (superblock non-optimized). 
The throughput of \solution (superblock optimized) is 1546\,TPS and 39.7\% higher than the throughput of the naive approach.
This is because trying to persist a large superblock that comprises 15000 transactions leads to high I/O congestion (Section~\ref{superblockopt-section}).

Note that a single c5.2xlarge instance executing an \solution VM in Section~\ref{line:evaloptimizations} yielded a throughput of 1025\,TPS, less than 1546\,TPS for four c5.2xlarge \solution nodes with the superblock optimized benchmark (Figure~\ref{fig:superblock}). The improvement of the latter is due to being able to receive more transactions concurrently at multiple \solution nodes allowing a large number of transactions to be committed in a superblock per consensus.



\subsection{Effects of validation reduction}\label{sec:vreduction}

In order to assess the impact of validation reduction we measured both the total time the \solution VM $\Delta^{n}_{\solution}$ spent treating $k$ native payment transactions and the average time 
$\delta^{n}_{\solution}$ spent eagerly validating these $k$ transactions on a single \solution node executing on 4\,vCPUs and 8\,GB RAM AWS instance (c5.xlarge). As such, the time not affected by the validation optimization is $\beta = \Delta^{n}_{\solution} - \delta^{n}_{\solution}$. Based on this measurement, we could deduce the time $\delta_{\solution}$: the \solution node would spend validating eagerly with the validation optimization if the number of \solution nodes were $n$:
$\delta_{\solution_{v-opt}} =\delta^{n}_{\solution}$/$n$. We know that the \solution VM would spend $\Delta_{\solution} = \beta + \delta_{\solution}$ to treat $k$ transactions. By contrasts, the validation optimized \solution VM would spend $\Delta^{n}_{\solution_{v-opt}} = \beta + \delta^{n}_{\solution}$/$n$. This means the \solution VM validation optimization latency improvement is:
$$S = \frac{\Delta_{\solution} - \Delta^{n}_{\solution_{v-opt}}}{\Delta^{n}_{\solution_{v-opt}}} = \frac{\delta_{\solution}-\delta^{n}_{\solution}/n}{\beta+ \delta^{n}_{\solution}/n}.$$

As $n$ tends to infinity, we thus have a slowdown of: $$\lim_{n\rightarrow +\infty}S = \frac{\delta_{\solution}}{\beta}.$$

Our measurement obtained with $k=15000$ transactions and $n=1$ revealed that 
$\delta^{n}_{\solution} = 4.27$ seconds and $\Delta^{n}_{\solution} = 17.87$ seconds. Hence, we have $\beta =  17.87-4.27 = 13.6$. If we consider $n=4$, $\delta_{\solution_{v-opt}} = 4.27/4 = 1.07$, so that $\Delta^{n}_{\solution_{v-opt}} = 13.6 + 1.07 = 14.67$. The improvement with validation optimization is thus, $(17.87 - 14.67)\cdot100/14.67=21.8\%$. As $n$ tends to infinity, the improvement in performance will be $4.27\cdot100/13.6=31.39\%$

\begin{table}[ht!]
\centering
\setlength{\tabcolsep}{7pt}
\begin{tabular}{ccc} 
 \toprule
 Machine  & Improvement on & Asymptotic gain \\  
 specification & small networks & on large networks \\
 \midrule
 c5.xlarge & 21\% & 31\% \\  
 c5.2xlarge & 24\% & 36\% \\ 
 c5.4xlarge & 28\% & 42\%  \\
 c5.9xlarge & 32\% & 48\%  \\
 \bottomrule
\end{tabular}
\caption{The validation reduction allows the \solution VM to scale horizontally (with the amount of resources allocated to each node) and vertically (with the number 
of nodes where it runs)}
\label{table:vreduction}
\end{table}

Table~\ref{table:vreduction} presents the performance improvements induced by the validation reduction when sending 15000 native payment transactions to the \solution VM deployed on different networks.
Its second column indicates the gain obtained in the smallest instance of our network, when $n=4$, whereas its third column indicates the gain obtained in the largest possible network, when $n\rightarrow \infty$.
The validation reduction allows the performance of the \solution VM to scale vertically because its gain grows from 31\% to 48\% (last column) with the amount of allocated resources and horizontally because its gain grows from 32\% to 48\% (last row) with the size of the network.

\begin{figure}[ht]
	\hspace{1em}
	\includegraphics[scale=0.8]{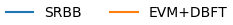}\\
	\includegraphics[scale=0.35,clip=true,viewport=-8 50 700 270]{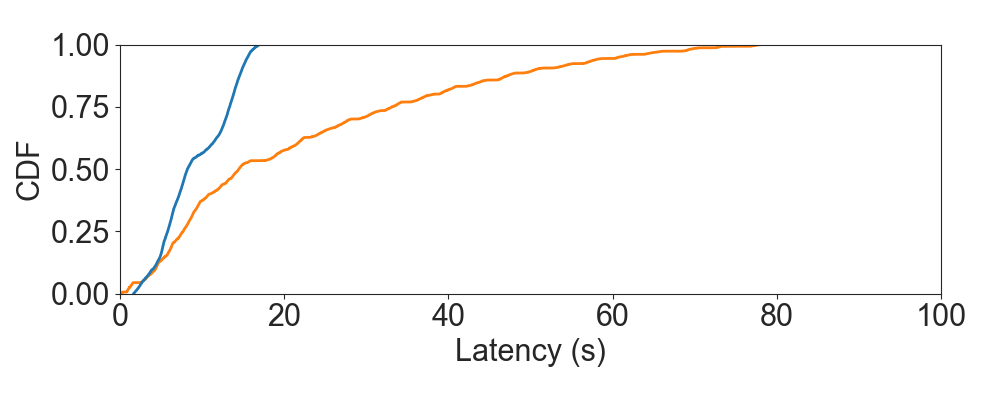}\\
	\includegraphics[scale=0.363,clip=true,viewport=17 25 700 270]{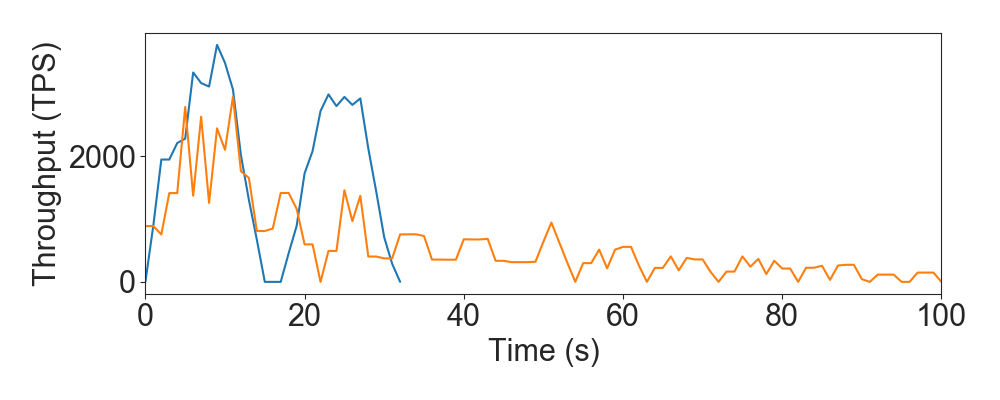}
	\caption{The latency (top) and throughput (bottom) comparison between \solution (optimized) and the default \texttt{geth} EVM equipped with DBFT to decide superblocks (non optimized)
	\label{fig:optimized}}
\end{figure}

\begin{figure*}[ht]
	\hspace{1em}\includegraphics[scale=0.25]{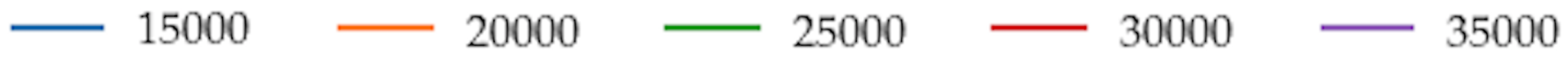}\\
	\includegraphics[scale=0.35]{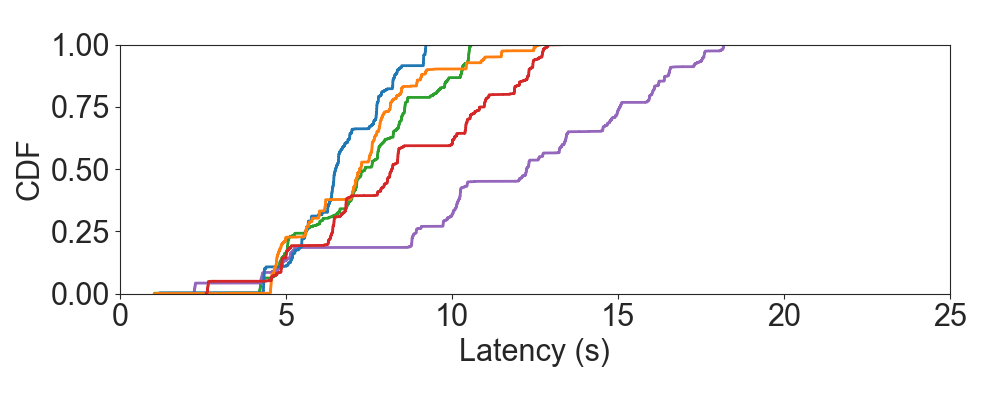}\hspace{-1em}
	\includegraphics[scale=0.35]{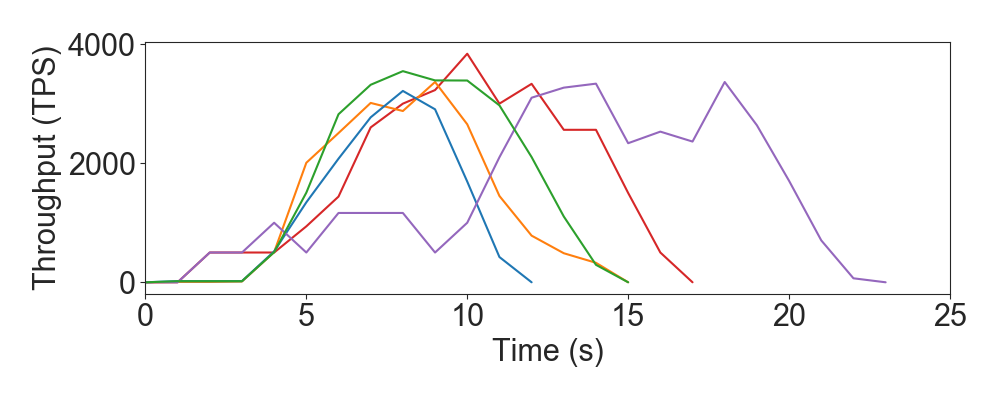}
	\caption{Scalability of \solution as its latency (left) and throughput (right) under varying sending rates (from 15,000\,TPS to 35,000\,TPS) when deployed on 200 machines spread in 5 continents\label{fig:srate}}
\end{figure*}

\subsection{Overall performance improvement}
\label{sec:cumulativeperf}
To show the performance benefit of \solution, we evaluated the naive EVM simply plugged to the superblock supported DBFT consensus (i.e., non-optimized) against \solution (optimized) -- the \solution VM plugged to the superblock supported DBFT consensus. We use a sending rate of 30,000\,TPS with 60,000 transactions sent over 2 seconds to perform this evaluation for four c5.2xlarge nodes.

Figure~\ref{fig:optimized} (top) presents the latency distribution as a cumulative distribution function for all 
transactions for \solution and the non-optimized alternative. 
On the one hand, we observe that the naive EVM-DBFT blockchain takes 23 seconds on average to commit a transaction
and more than 1mn13s to commit some transactions.
The default EVM acts as a bottleneck: while DBFT is optimized to treat transactions fast~\cite{CNG21}, the EVM is not. 
The reason is that the EVM was originally designed for Ethereum, which anyway recommends users to wait for about 
3 minutes (12 confirmations, each taking 15 seconds in expectation) to consider their transaction committed.
On the other hand, \solution commits transactions within 9.6 seconds on average while all transactions are 
committed in less than 16 seconds after they are issued. 
As a result, \solution halves the average latency of the naive blockchain and reduces its observed worst-case latency by $4.6\times$.

Figure~\ref{fig:optimized} (bottom) depicts the throughput over time of the naive blockchain and \solution.
Note that even though \solution transaction latencies are lower than 20 seconds, the throughput of \solution does not drop to zero after 20 seconds: this is because transactions are not all issued at the beginning of the experiments, some of them are issued later.
\solution peaks at a throughput of \textasciitilde4000\,TPS higher than the \textasciitilde3000\,TPS peak throughput that the naive approach reaches. 
The naive EVM+DBFT blockchain and \solution deliver an average throughput of 612\,TPS and 1935\,TPS, respectively. This represents an improvement of $3.2\times$.
Note that the overall speedup of \solution does not correspond to the sum of the speedups obtained by each of our individual optimizations. This is because optimizing the transaction management of \solution
leads to delivering superblock faster to the execution engine. As a result, its transaction management may bottleneck by concurrently receiving, validating, and proposing transactions to the consensus protocol while executing fast-delivered superblocks. 


\subsection{World-wide scalability}\label{sec:geodistributed}

We define scalability as the ability to maintain reasonably good performance when a blockchain has a large number of nodes in a geo-distributed environment.
To evaluate the scalability of \solution, we deployed 200 c5.2xlarge nodes of \solution in 10 regions spanning 5 continents. The benchmark was performed with varying sending rates from 15000 to 35,000\,TPS. We achieved an average throughput of 2000\,TPS over all these executions. The peak throughput was always above 3000\,TPS but showed the best performance at a sending rate of 30,000\,TPS, peaking at close to \textasciitilde4000\,TPS as can be seen on Figure~\ref{fig:srate}. Generally, with the sending rate increasing the peak throughput showed an increase, although it showed a slight dip at a sending rate of 35000\,TPS due to \solution not processing transactions at the send rate due to a bottlenecked CPU and memory.

The latency on the other hand increased with increasing sending rates, with the best latency at a sending rate of 15000\,TPS with all transactions being committed under 10 seconds.



\subsection{Comparison with other blockchains}
\label{section:comparison}

To evaluate \solution in a realistic setting,
we evaluate the performance of a realistic DApp running on \solution and on 6 other state-of-the-art blockchains that also support DApps: Algorand, Avalanche (C-Chain), Ethereum (with Clique consensus), Libra/Diem, Quorum (with IBFT consensus) and Solana. To this end, we used the Diablo benchmark~\cite{BGG21} that is publicly available~\cite{DiabloLink}
and aims at comparing blockchains fairly. We selected the Diablo benchmark because it is, as far as we know, the only 
blockchain benchmark that features a real workload trace. More precisely, we ran the exchange DApp available in Solidity, PyTeal, and Move in the Diablo benchmark that executes the real \textsc{Nasdaq} stock trade trace for 2 minutes (Move is the smart contract language supported by Libra-Diem, PyTeal is the smart contract language used to produce Teal bytecode for Algorand and the Solidity language is supported by all the other blockchains we tested.). Note that we did not include the Red belly blockchain in this benchmark as it does not support smart contracts/DApps.
We used the same deployment setting as in Section~\ref{sec:geodistributed} with 200 c5.2xlarge machines distributed in 10 countries on 5 continents.

\begin{figure*}[ht]
	\includegraphics[scale=0.79]{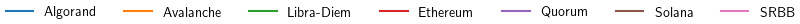}\\
    \includegraphics[scale=0.35]{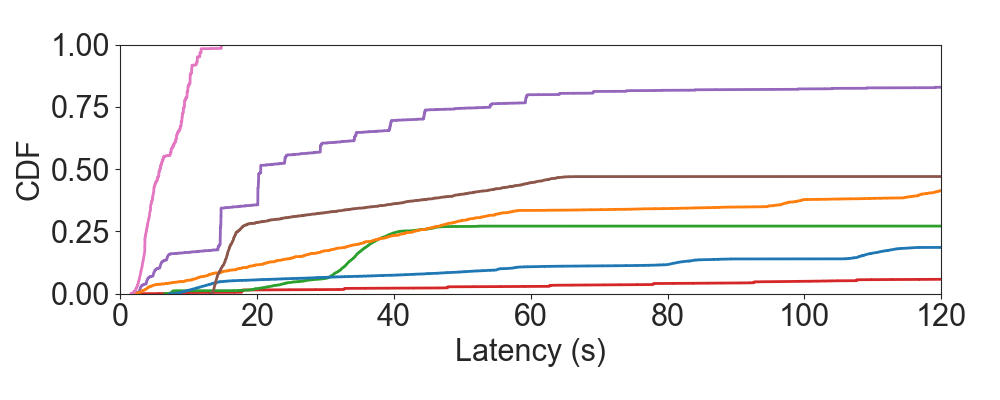}
    \hspace{-1em}\includegraphics[scale=0.35]{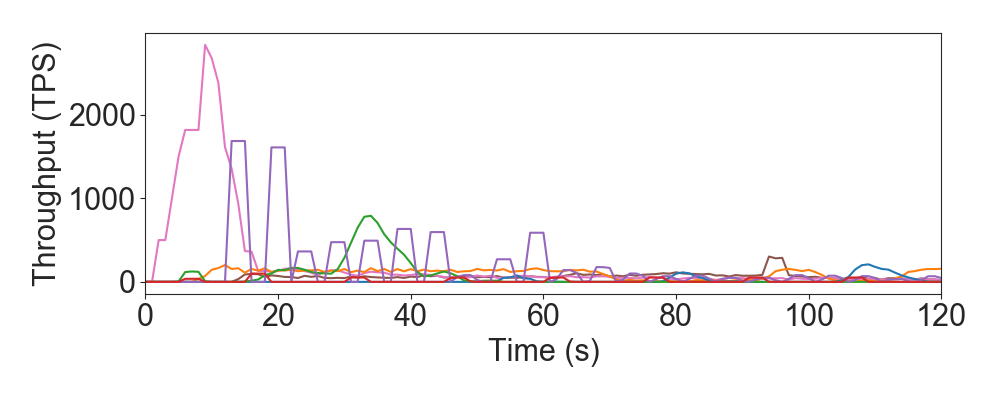}
	\caption{Comparison of modern blockchains latency (left) and throughput (right) when running the exchange decentralized application (DApp) with the \textsc{Nasdaq} trace\label{fig:gafam}}
\end{figure*}

Figure~\ref{fig:gafam} depicts the latency distribution as the cumulative distribution function of the latencies of transactions observed during the experiments and the throughput as the volume of transactions committed per second. 

As expected, \solution is much faster at committing transactions than other blockchains. This happens for several reasons. First, the validation reduction optimization reduces the transaction validation time, allowing transactions to be committed faster. Second, the sub-block storing optimization significantly reduces I/O reducing the waiting time of the CPU before processing transactions.

What is more surprising is that \solution is the only blockchain to not lose any request.
This can be observed in Figure~\ref{fig:gafam} (left) as the ratio of committed transactions over all requested transactions only reaches 1 for \solution. Note that running the benchmark for more than 2 minutes would not allow other blockchains to commit all transactions as we can clearly detect some asymptotic bounds.
The reason is that the \textsc{Nasdaq} workload from Diablo features the stock trades corresponding to Apple, Amazon, Facebook, Google, and Microsoft whose demand is particularly high when the market opens. As other blockchains take longer to handle each request, their backlog of pending transactions increases, and they become saturated. This is well illustrated by the throughput graph on Figure~\ref{fig:gafam} (right), where \solution throughput peaks well above and earlier than other
blockchains.

\section{Related Work}
\label{sec:rw}
In this section, we focus on work enhancing transaction management to improve the performance of DApps.
These works include executing transactions concurrently, introducing faster runtimes, using different language implementations of VMs, and decoupling transaction management tasks. None to our knowledge present the same optimization as us.

\paragraph{Concurrent Execution of Transactions}

Saraph and Herlihy~\cite{EmpStudy} perform an empirical study on speculative concurrent executions
of Ethereum smart contracts by sampling Ethereum blocks and executing smart contract transactions in them concurrently. 
Failed transactions abort and execute sequentially afterward.
This approach is known as an execute-order approach. The authors estimate the speedup of execution by the gas cost of transactions
without calculating the exact performance improvement since they do not implement an EVM using their speculative concurrent approach.
The presented speculative concurrent execution approach is estimated to speedup
performance starting from 8-fold to 2-fold when the number of conflicting transactions gradually increases.
Considering that Ethereum's average throughput is 15\,TPS the speculative concurrent approach would only yield an average throughput of 120\,TPS whereas our average throughput is \textasciitilde{}x17 higher.

Block-STM~\cite{https://doi.org/10.48550/arxiv.2203.06871} presents an in-memory execute-order paradigm smart contract parallel execution engine built on the principles of Software Transaction Memory (STM). It dynamically detects and avoids conflicts during concurrent transaction execution reaching the same final state as a sequential execution. Block-STM is proven safe and live and is implemented on the Diem blockchain and delivers
a throughput of 140k\,TPS in a baseline benchmark and 60k\,TPS in a contended workload. Block-STM executes conflicting transactions by keeping multiple versions of transaction executions in an in-memory data structure where each version corresponds to a unique location in the execution order of all transactions. The correct version is selected by a validation method and finalized. Since we do not implement concurrent execution of transactions, our approach does not require maintaining duplicated transaction execution versions and then a validation method to resolve those versions, all of which increases the workload on the CPU.

Similar to Block-STM~\cite{https://doi.org/10.48550/arxiv.2203.06871}, Parwat et al.~\cite{anjana2020efficient} proposes a concurrent smart contract execution approach using STM. However, in contrast, the latter follows an order-execute paradigm where miners create a Block Graph (BG) for conflicting transactions in a block and validators execute the smart contract transactions concurrently based on the BG. By having the distinction between miners and validators Parwat et al.~\cite{anjana2020efficient} could have issues related to byzantine miners~\cite{https://doi.org/10.48550/arxiv.2203.06871}. By pre-computing a BG, there is additional overhead compared to Block-STM. Tracking dependencies and building a BG before execution can be a tedious task and could seriously impair performance when the percentage of conflicting transactions is high. Our work on the other hand can maintain performance despite a high percentage of conflicting transactions since we execute transactions sequentially without any dependency pre-computations.

ParBlockchain~\cite{8885058} performs concurrent executions of distributed applications following an order-execute paradigm coined OXII.  Similar to Parwat et al.~\cite{anjana2020efficient}, ParBlockchain orders transactions within a block based on a dependency graph prior to execution. OXII is integrated into a permissioned blockchain coined ParBlockchain that features three types of nodes. Namely, 1) Clients that send requests to the blockchain;
2) Orderers that agree on the order of transactions, builds a dependency graph and executes consensus (e.g. PBFT); 3) Executors
that validate and execute transactions. The transactions that are found to have no dependencies in the dependency graph execute concurrently.
The evaluation reports better performance than sequential order-execute (OX) and execute-order-validate (XOV) for workloads with any degree of contention. ParBlockchain's decoupling tasks to three types of nodes reduces fault tolerance. Orderer nodes can tolerate up to $f$ failed nodes s.t. $N/3 > f$ but since these nodes are connected to decoupled executor nodes, the fault tolerance of executor nodes should be considered separately. This could reduce the fault tolerance of the entire blockchain. In contrast, \solution does not have this drawback brought forth by decoupling, and assumes a fault tolerance of $f$ at most for the entire network.

Kim et al.~\cite{273865} propose an off-chain concurrent execution of Ethereum transactions for scalable testing and profiling of smart contracts. The proposed approach lets a {\ttt geth} node synchronize and record the Ethereum state prior to every transaction execution known as the input sub-state and after the execution of every transaction known as the output sub-state. Thus, an input and output sub-state are recorded for each transaction in the Ethereum trace.
A sub-state of a transaction contains the information required to execute the transaction in complete isolation from other transactions.
After the sub-state recording is complete, 
all transactions in the Ethereum trace are executed in isolation of one another, each transaction executing in a dedicated thread concurrently. 
Each transaction is applied to its recorded input sub-state and the resulting output sub-state is compared with the recorded output sub-state to validate the execution. Although ideal for testing purposes after recording sub-states, unlike \solution, this approach cannot work in a blockchain receiving transactions in real-time. This is because recording transaction sub-states is impossible without actually executing transactions from clients sequentially.

\paragraph{Other VM optimizations}
Ethereum introduced eWASM (Ethereum Web Assembly) as a runtime for transaction executions~\cite{eWASMdoc}. Although eWASM bytecode is deemed as a faster alternative to EVM bytecode for smart contract (i.e., DApp) executions~\cite{VmMatters}, recent work comparing the performance of WASM EVMs and EVMs~\cite{ewasm} show that eWASM integrated VMs do not outperform {\ttt geth} and {\ttt Openethereum}. {\ttt Geth} shows
the best performance~\cite{VmMatters} for smart contract executions among popular Ethereum clients.
The authors claim that declined performance of the eWASM VMs is due to additional overhead for gas metering and inefficient context switching of methods used for the interaction of high-level opcodes of Ethereum with WASM.

Different implementations of Ethereum yield varying performances. Rouhani and Deters~\cite{8342866} compare the performance of {\ttt geth} and the Rust implementation of Ethereum called Parity (now known as {\ttt Openethereum}) in two private permissioned blockchains, one consisting only {\ttt geth} nodes and the other only {\ttt Openethereum} nodes.
As reported, Parity processes transactions 89.8\% faster than {\ttt geth}. This performance disparity is questionable, since, in contrast, paper~\cite{ewasm} shows Geth's contract execution performance to be faster than Openethereum's.
\paragraph{Popular blockchains with enhanced transaction management}

RainBlock~\cite{ponnapalli2021rainblock} enhances the transaction management of Ethereum by introducing optimizations that remove the I/O bottleneck in the transaction execution phase caused by state trie updates. It removes I/O
from the critical path by decoupling transaction executions from  I/O by sharding its concurrent read-write Merkel Trie (DSM tree) into sub-trees and storing them in the memory of separate storage nodes. I/O helpers pre-fetch Merkel trie data from storage nodes and submit these trie data to miners for transaction execution.
Miners update the trie data and update the storage nodes asynchronously. While the reported performance of RainBlock is 20K\,TPS for 32\,GB RAM, 8\,vCPU instances spanning 4 AWS regions which include 4 miners, 16 I/O helpers, and 16 storage nodes, the decoupling made to store tries in separate storage nodes adversely impacts its fault tolerance. If storage nodes are byzantine or compromised with a DDoS attack, the safety of the blockchain could be compromised as miners will not receive the Merkel trie data required for execution. In contrast, \solution does not have this problem as there is no decoupling of \solution nodes into separate storage nodes. 


\sloppy{Solana~\cite{yakovenko2018solana} is a high-performance blockchain providing enhanced transaction management. Perhaps the most noteworthy
of its transaction management optimizations is Sealevel~\cite{SolanaSealevel} -- a parallel smart contract runtime made possible by the unique transactions of Solana that describe read and write states separately. This allows non-conflicting transactions and transactions that read from the same state to execute concurrently. Solana also optimizes transaction validation by pipelining validation in a software-built transaction processing unit (TPU) into four stages. The signature verification is offloaded to a GPU. However, by XORing its state hash, Solana has been found to be vulnerable to state hash collision attacks~\cite{SolanaWeak} whereas \solution does not suffer from this vulnerability.}
\vspace{-0.5em}

\section{Conclusion}
\label{sec:conclusion}
In this paper, we enhanced transaction management by reducing transaction validations and optimizing block processing.
Through these enhancements, we developed the \solution VM. We then rigorously evaluated the performance increase of each of our optimizations.
To apply our contributions, we developed \solutionlong supporting the largest eco-system of DApps. Finally, we empirically showed that \solution outperforms recent high-performance blockchains that support DApps.

\appendix

\section{Consensus Protocol and Proofs of Correctness}\label{line:proof}
The protocol is divided in two procedures, {\ttt start\_new\_consensus} at lines~\ref{line:start-new-cons}--\ref{line:superblock} that spawns a new instance of (multivalue) consensus by incrementing the replicated state machine {\ttt index}, and {\ttt consensus\_propose} at lines~\ref{line:cons-start}--\ref{line:cons-end} that ensures that the consensus participants find an agreement on a superblock comprising all the proposed blocks that are acceptable. The idea of {\ttt consensus\_propose} builds upon classic reduction~\cite{BCG93,BKR94} by executing an all-to-all reliable broadcast~\cite{B87} to exchange $n$ proposals, guaranteeing that any block delivered to a correct process is delivered to all the correct processes: any delivered proposal is stored in an array $\ms{proposals}$ at the index corresponding to the identifier of the broadcaster.

\begin{lstlisting}[language=parameterized,basicstyle=\LSTfont,escapechar = ?,escapeinside={(*}{*)},frame = single,firstnumber=20]
index := 0 // consensus instance
blockQueue := (*$\emptyset$*) // pending block to propose
commitChan := chan []byte  // commit channel

start_new_consensus(): (*\label{line:start-new-cons}*)
  index := index + 1 // increment round
  myBlock := blockQueue.peek() // get block proposal
  superblock := consensus_propose(myBlock) // block (*\label{line:propose}*)
  if (myBlock is in superblock) then    (*\label{line:remove-from-queue-start}*)
    blockQueue.poll() // dequeue proposal   (*\label{line:remove-from-queue-end}*)
  commitChan := superblock // send to gRPC srvce (*\label{line:superblock}*)
\end{lstlisting}

A binary consensus at index $k$ is started
with input value {\ttt true} for each index $k$ where a block proposal has been recorded (line~\ref{line:bbc}). 
To limit errors, \solution uses the formally verified deterministic binary consensus of DBFT~\cite{CGLR18}, we omit the pseudocode for the sake of space and refer the reader to the formal verification of the protocol~\cite{TG19,BGK21}. 

As soon as some of these binary consensus instances return 1, the protocol spawns binary consensus instances with proposed value $\lit{false}$ for each of the non reliably delivered blocks at line~\ref{line:bbc-propose}. Note that this invocation is non-blocking.
As the reliable broadcast fills the {\ttt block} in parallel, it is likely that the blocks reliably broadcast by correct processes have been reliably delivered resulting in as many invocations of the binary consensus with value {\ttt true} instead. Once all the $n$ binary consensus instances have terminated, i.e., {\ttt decidedCount == n} at line~\ref{line:bbc-end}, the superblock is generated with all the reliably delivered blocks for which the corresponding binary consensus returned {\ttt true} (lines~\ref{line:supblock-start}--\ref{line:cons-end}).
At the end of {\ttt start\_new\_consensus}, if the superblock of the consensus contains the block proposed, then this block is removed from the {\ttt blockQueue} at lines~\ref{line:remove-from-queue-start} and~\ref{line:remove-from-queue-end} to avoid reproposing it later.

We show that \solution solves the blockchain problem (Def.~\ref{def:blockchain}).

\begin{lemma}
	\label{theorem1}
	If at least one correct \solution node $\lit{consensus-propose}$s to a consensus instance $i$, then 
	every correct \solution node decides on the same superblock at consensus instance $i$.
\end{lemma}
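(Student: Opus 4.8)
The plan is to reduce agreement on the superblock to two independent guarantees inherited from the building blocks: the agreement of the $n$ per-slot binary consensus instances, which fixes \emph{which} proposals enter the superblock, and the agreement of the all-to-all reliable broadcast, which fixes the \emph{content} of each entering proposal. By construction, the superblock at a correct node is the sequence of reliably delivered blocks indexed by those slots whose binary consensus returned $1$, ordered canonically by broadcaster identifier; so establishing these two facts, together with termination of every instance, suffices to conclude that all correct nodes assemble byte-identical superblocks.

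First I would argue agreement on the set of included slots. By the termination and agreement properties of the DBFT binary consensus, every correct node eventually decides, and decides the \emph{same} value, in each of the $n$ instances. Hence the set $S = \{k : \text{binary consensus } k \text{ decides } 1\}$ is identical at every correct node. To see that every instance is actually started and therefore terminates, I would invoke the hypothesis: the correct node that {\ttt consensus\_propose}s reliably broadcasts its block, so by reliable-broadcast validity all correct nodes deliver it and start the corresponding binary consensus with input {\ttt true}; by binary-consensus validity this instance decides $1$, which triggers spawning the remaining instances with input {\ttt false}, so that {\ttt decidedCount} reaches $n$ at every correct node.

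Next I would argue agreement on content for each included slot. Fix $k \in S$. Because binary consensus $k$ decided $1$, the validity of the binary consensus guarantees that some correct node proposed {\ttt true} for slot $k$, and a correct node proposes {\ttt true} for $k$ only after it has reliably delivered a block at index $k$. By the agreement (totality) property of the reliable broadcast, if one correct node delivers the block broadcast for slot $k$ then every correct node eventually delivers that same block, while integrity ensures at most one such block per slot. Consequently, for each $k \in S$ all correct nodes hold identical block content; combined with agreement on $S$, all correct nodes construct the same superblock.

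The main obstacle I anticipate is carefully coupling the inclusion decision to content availability under asynchrony. A correct node may decide $1$ for slot $k$ via the binary consensus before its own reliable-broadcast delivery of block $k$ has completed, so I must argue it can still fill the superblock entry — which is exactly what reliable-broadcast totality provides, since the delivery is only delayed, never absent, and the node simply waits. I would also make explicit that slots decided $0$ contribute nothing, so no disagreement can arise from blocks that only some correct nodes delivered, and that ordering by broadcaster identifier is deterministic. The delicate part is thus the interplay between binary-consensus validity (a decision of $1$ implies a correct proposer, hence at least one correct delivery) and reliable-broadcast totality (one correct delivery implies delivery of the same block at all), which together bridge the gap between ``decided to include'' and ``holds the identical content to include.''
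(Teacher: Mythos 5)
Your proof is correct and takes essentially the same route as the paper's: reliable-broadcast agreement/totality fixes identical contents for every delivered block, while DBFT binary-consensus agreement and termination fix an identical set of slots decided $1$, so all correct nodes assemble the same superblock; if anything, yours is more careful, since the paper invokes ``validity and termination'' where agreement is what is actually needed, and it never addresses the decided-$1$-before-delivery race that you resolve via totality. One intermediate step is overstated, though harmlessly: the proposing node's own instance need not ``decide $1$ by binary-consensus validity,'' because a correct node that sees some \emph{other} instance decide $1$ first may propose \emph{false} for that slot before delivering the block, giving mixed inputs and possibly a $0$ decision; in that schedule, however, the trigger for spawning the remaining instances has already fired, so the count of decided instances still reaches $n$ at every correct node --- the termination argument should therefore be phrased as this case split rather than as a consequence of validity.
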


\begin{proof}
If a correct \solution node $p$ $\lit{consensus-propose}$s, say $v$, to a consensus instance $i$, then $p$ reliably broadcast $v$ at line~\ref{line:rbcast}. By the reliable broadcast properties~\cite{B87}, we know that $v$ is delivered at line~\ref{line:rb-deliver} at all correct validator nodes. By assumption, there are at least $2f+1$ correct proposers invoking the reliable broadcast, hence all correct proposers eventually populate their $\lit{block}$ array with at least one common value. All correct proposers will thus have input $\lit{true}$ for the corresponding binary consensus instance at line~\ref{line:bbc}.
Now it could be the case that other values are reliably-broadcast by byzantine nodes, however, reliable broadcast guarantees that if a correct proposer delivers a valid value $v$, then all correct proposers deliver $v$. By the validity and termination 
properties of the DBFT binary consensus~\cite{CGLR18}, the decided value for the binary consensus instance at line~\ref{line:bbc-true} is the same at all correct \solution nodes. It follows that all correct \solution nodes have the same bit array of $\lit{decBlocks}$ values at line~\ref{line:decided-blocks} and that they all return the same superblock at line~\ref{line:cons-end} for consensus instance $i$.
\end{proof}

\begin{lstlisting}[language=parameterized,basicstyle=\LSTfont,escapechar=|,escapeinside={(*}{*)},frame = single,firstnumber=31]
blocks := (*$\emptyset$*) // blocks delivered by reliable bcast
upon reliable_broadcast.deliver(i, block): (*\label{line:rb-deliver}*)
    blocks[i] := block // append block to list
    decBlocks[i] := b_consensus.propose(i, true) (*\label{line:bbc}*)
consensus_propose(myBlock):  (*\label{line:cons-start}*)
  decCount := 0 // # decided bin. cons. instances
  decBlocks :=  (*$\emptyset$*) 
  reliable_broadcast.broadcast(myId, myBlock)  (*\label{line:rbcast}*)
  wait until (*$\exists$*)i : b_consensus.decide(i) == true (*\label{line:bbc-true}*)
     for j  from 0 to n do
       if blocks[j] is null then
         decBlocks[j] := b_consensus.propose(j, false) (*\label{line:bbc-propose}*)
     decCount := decCount+1
   wait until decCount == n (*\label{line:bbc-end}*) 
     superblock := (*$\emptyset$*) (*\label{line:supblock-start}*)
     for i from 0 to n do 
       if decBlocks[i] is true then (*\label{line:decided-blocks}*)
         superblock.add(blocks[i])
  return superblock  (*\label{line:cons-end}*)
 \end{lstlisting}


The next three theorems show that \solution satisfies each of the three properties of the blockchain problem (Definition~\ref{def:blockchain}).
\begin{theorem}
	\solution satisfies the safety property.
\end{theorem}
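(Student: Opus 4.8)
The plan is to reduce the safety property to the per-instance agreement already established in Lemma~\ref{theorem1}. First I would observe that every correct \solution node builds its chain by appending, in increasing order of the consensus-instance index, the superblock decided at each instance: the variable {\ttt index} is incremented once per call to {\ttt start\_new\_consensus} (line~\ref{line:start-new-cons}), so the consensus instances are totally ordered and each node processes the superblock of instance $i$ before that of instance $i+1$. Consequently the set of instances a correct node has decided forms a contiguous prefix $1, 2, \ldots, m$, its chain is fully determined by the corresponding sequence of superblocks, and it suffices to show that any two correct nodes agree on the superblock of every instance they have both decided.

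Next I would fix two correct nodes $p$ and $q$ and an arbitrary instance $i$ that both have decided. Since $p$ is correct and decided at instance $i$, it must have invoked {\ttt consensus\_propose} for $i$ inside {\ttt start\_new\_consensus} (line~\ref{line:propose}); thus at least one correct node consensus-proposes to instance $i$, and Lemma~\ref{theorem1} applies, giving that $p$ and $q$ decide the identical superblock at instance $i$. Because the superblock is assembled deterministically---iterating {\ttt for i from 0 to n} and adding {\ttt blocks[i]} exactly when {\ttt decBlocks[i]} is {\ttt true} (lines~\ref{line:supblock-start}--\ref{line:cons-end})---and Lemma~\ref{theorem1} already shows all correct nodes share the same {\ttt blocks} and {\ttt decBlocks} arrays, the superblocks agree not merely as sets but as ordered lists of blocks. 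The VM then inserts these blocks into the chain in that same order (function {\ttt execute\_transaction}, lines~\ref{line:exec-tx-start}--\ref{line:exec-tx-end}), so the block-by-block content contributed by instance $i$ is identical at $p$ and $q$.

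Finally I would combine these observations. Let $m_p$ and $m_q$ be the highest instance indices decided by $p$ and $q$, and assume without loss of generality $m_p \le m_q$. By the preceding step, $p$ and $q$ produce identical blocks for every instance $1, \ldots, m_p$, so the chain of $p$ is exactly the prefix of the chain of $q$ consisting of the superblocks of instances $1, \ldots, m_p$; if $m_p = m_q$ the chains are identical. This is precisely the safety property of Definition~\ref{def:blockchain}.

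I expect the main obstacle to be the bookkeeping that ties chain positions to consensus-instance indices: one must argue carefully that the monotone {\ttt index} counter forces both the superblocks and the individual blocks within them to occupy the same positions at every correct node, and that decided instances form a contiguous prefix, so that agreement at each instance lifts to the prefix relation on whole chains. The underlying agreement itself is delivered wholesale by Lemma~\ref{theorem1}, so the remaining work is structural rather than requiring any fresh distributed-computing argument.
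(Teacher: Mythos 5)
Your proposal is correct and follows essentially the same route as the paper's proof: both reduce safety to the per-instance agreement of Lemma~\ref{theorem1} and then handle the case where one correct node lags behind another (due to asynchrony) by observing that its chain is a prefix of the other's. Your version merely spells out the structural bookkeeping (contiguity of decided instances, deterministic superblock assembly) that the paper leaves implicit.
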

\begin{proof}
	The proof follows from the fact that any block $B_{\ell}$ at index $\ell$ of the chain is identical for all correct \solution nodes due to Lemma~\ref{theorem1}.
	Due to network asynchrony, it could be that a correct node $p_1$ is aware of block $B_{\ell+1}$  at index $\ell+1$, whereas another correct node $p_2$ has not created this block $B_{\ell+1}$ yet.
	At this time, $p_2$ maintains a chain of blocks that is a prefix of the chain maintained by $p_1$.
	And more generally, the two chains of blocks maintained locally by two correct blockchain nodes are either identical or one is a prefix of the other. 
\end{proof}

\begin{theorem}
	\solution satisfies the validity property.
\end{theorem}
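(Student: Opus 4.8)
The plan is to trace every transaction through the commit phase and show that any transaction actually persisted into a block must have passed the lazy-validation filter, so that each appended block is by construction a set of valid transactions. First I would invoke Lemma~\ref{theorem1} to fix a common superblock: since all correct nodes decide the same superblock at each consensus instance, it suffices to reason about a single correct node running {\ttt execute\_transaction} (lines~\ref{line:exec-tx-start}--\ref{line:exec-tx-end}), and the conclusion then transfers verbatim to every correct node and every chain index.

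Next I would examine the loop that processes each block of the superblock (lines~\ref{line:block-loop-start}--\ref{line:block-loop-end}). For every deserialized transaction {\ttt tx}, the node performs the lazy validation {\ttt isValid(tx)} at line~\ref{line:validate}, and only when this check succeeds is {\ttt tx} added to the set {\ttt vtxs}. Crucially, it is exactly {\ttt vtxs}---not the raw deserialized block---that is executed and written to the chain by {\ttt persist(vtxs)} at line~\ref{line:persist}. Hence any transaction failing lazy validation is dropped before it can be appended, and the persisted block coincides with the set of lazily-validated transactions.

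I would then argue that passing lazy validation, together with the deserialization and sender-recovery steps, discharges each clause of the validity definition: recovering the issuer in order to check its nonce already requires a well-formed signature, so every persisted transaction is correctly signed; the in-order nonce check (reinforced by the ordering in {\ttt updateBlockState}, line~\ref{line:update-block-state}) rules out conflicting transactions from the same account; and the format checks performed during deserialization guarantee well-formedness. I would also dispatch the byzantine case directly, using Section~\ref{ssec:tx-validation}: a byzantine proposer that skips eager validation can only inject transactions that are either rejected at line~\ref{line:validate} or fail during {\ttt executeTx} and are reversed, and in neither case does an invalid transaction survive into {\ttt persist}.

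The main obstacle, I expect, is fixing the precise meaning of ``valid'' so that the lazy check is genuinely sufficient. Since the lazy validation of Section~\ref{sec:bgrd} only re-checks nonce and gas, I must carefully attribute signature correctness to sender recovery rather than to an explicit signature test, and I must separate a transaction being \emph{valid} (well-formed, correctly signed, non-conflicting) from its execution merely \emph{succeeding}---a transaction may legitimately fail {\ttt executeTx} yet remain valid and be persisted. Once this distinction is pinned down, the filtering argument over {\ttt vtxs} closes the proof.
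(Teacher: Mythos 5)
Your proof takes essentially the same route as the paper's: the paper's entire argument is that, by examination of the code at line~\ref{line:validate}, only transactions passing the lazy validation are executed and persisted at every correct \solution node, hence every block $B_{\ell}$ is valid. Your extra steps---invoking Lemma~\ref{theorem1} (which is not actually needed, since validity is a per-node property established by local code inspection) and unpacking the individual clauses of validity---are elaborations of that same filtering argument over {\ttt vtxs}, not a different proof.
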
	
\begin{proof}
By examination of the code at line~\ref{line:validate}, only valid transactions are executed and persisted to disk at every correct \solution node.
It follows that for all indices $\ell$, the block $B_{\ell}$ is valid.
\end{proof}

\begin{theorem}\label{thm:liveness}
	\solution satisfies the liveness property.
\end{theorem}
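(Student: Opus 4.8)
The plan is to trace a transaction through the entire lifecycle of Section~\ref{sec:lifecycle} together with the consensus protocol, invoking Lemma~\ref{theorem1} for agreement and termination and using the reliable broadcast together with binary consensus \emph{validity} for \emph{inclusion}. Let $t$ be a transaction received by a correct \solution node $p$. First I would argue that $p$ eagerly validates $t$ and, since $t$ is well-formed and correctly signed, keeps it in its transaction pool; once the pool threshold is reached, $p$ packs $t$ into a proposed block $B$ and enqueues $B$ in its \texttt{blockQueue} (stage~1 of Section~\ref{sec:lifecycle}). Because \texttt{blockQueue} is serviced in FIFO order by successive invocations of \texttt{start\_new\_consensus}, the block $B$ eventually reaches the head of the queue and is proposed through \texttt{consensus\_propose} at line~\ref{line:propose}.

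The core of the argument is to show that $B$ is included in the superblock decided at that instance. Since $p$ is correct, it executes the reliable broadcast at line~\ref{line:rbcast}, so by the reliable broadcast properties $B$ is delivered at line~\ref{line:rb-deliver} to every correct node, which therefore records $B$ in its $\ms{blocks}$ array at $p$'s index and proposes $\lit{true}$ to the binary consensus instance associated with $p$ (line~\ref{line:bbc}). With all correct nodes proposing $\lit{true}$, the validity and termination of the DBFT binary consensus~\cite{CGLR18} force that instance to decide $\lit{true}$, so the corresponding $\ms{decBlocks}$ entry is $\lit{true}$ and $B$ is added to the superblock before it is returned at line~\ref{line:cons-end}. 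By Lemma~\ref{theorem1}, every correct node decides this \emph{same} superblock, hence every correct node commits $B$; moreover $B$ is removed from $p$'s queue at lines~\ref{line:remove-from-queue-start}--\ref{line:remove-from-queue-end}, so it is not re-proposed indefinitely. At each correct node, $B$ is then deserialized, its transactions lazily validated, executed and persisted to the block sequence. Because $t$ passed eager validation at $p$ it is valid, hence it survives the lazy validation at line~\ref{line:validate} and is persisted at line~\ref{line:persist} at every correct node, establishing that $t$ is reliably stored in the block sequence of all correct nodes.

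The main obstacle is establishing \emph{progress} rather than mere agreement: Lemma~\ref{theorem1} gives that \emph{if} a correct node proposes then consensus terminates with an agreed superblock, but I must additionally argue that \texttt{start\_new\_consensus} keeps being invoked so that $B$ is eventually proposed, and that each per-instance binary consensus actually terminates. The latter relies on the partial synchrony assumption of Section~\ref{ssec:bc}: after the global stabilization time, every one of the $n$ binary consensus instances receives a proposal from each correct node---either $\lit{true}$ upon delivery (line~\ref{line:bbc}) or $\lit{false}$ after some instance decides $\lit{true}$ (line~\ref{line:bbc-propose})---so that \texttt{decCount} reaches $n$ and \texttt{consensus\_propose} returns. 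A secondary subtlety worth flagging is the block-formation threshold of stage~1: if $p$'s pool never reaches its threshold, $B$ is never created, so the proof should assume either a steady arrival of transactions or a timeout that flushes the pool, guaranteeing that $t$ is eventually packed into some proposed block.
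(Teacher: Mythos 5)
There is a genuine gap in your core inclusion step. You claim that because $p$ reliably broadcasts $B$, every correct node delivers $B$, records it, and proposes $\lit{true}$ for $p$'s binary consensus instance, so that validity forces a $\lit{true}$ decision and $B$ lands in the superblock \emph{of that very instance}. But the protocol does not guarantee that all correct nodes propose $\lit{true}$ for $p$'s index: a correct node $q$ proposes $\lit{true}$ for index $i$ only \emph{upon} reliable-broadcast delivery of block $i$ (line~\ref{line:bbc}), and as soon as \emph{some other} binary instance decides $\lit{true}$, $q$ proposes $\lit{false}$ for every index whose entry in $\lit{blocks}$ is still null (line~\ref{line:bbc-propose}). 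Before stabilization, delivery of $B$ at $q$ can lose this race, so $p$'s instance receives mixed proposals, and binary consensus validity then permits a $\lit{false}$ decision; $B$ may be excluded from the decided superblock. Your reading of lines~\ref{line:remove-from-queue-start}--\ref{line:remove-from-queue-end} as merely preventing indefinite re-proposal inverts their actual role: the block is dequeued \emph{only if} it appears in the superblock, precisely so that an excluded block stays at the head of \texttt{blockQueue} and is proposed again at the next instance.

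The paper's proof takes exactly this retry route: it argues the transaction is eventually proposed (line~\ref{line:propose}), invokes termination of the (verified) consensus, and then appeals to the fact that \solution keeps spawning new consensus instances as long as correct nodes have pending transactions --- so a block that misses one superblock is reproposed until it is eventually included. Your proof would be repaired by replacing the single-instance inclusion claim with this repetition argument (eventual inclusion across instances), keeping your per-instance termination sketch and your final step that a transaction passing eager validation also passes lazy validation (line~\ref{line:validate}) and is persisted (line~\ref{line:persist}) at all correct nodes. On the positive side, your observation that the pool threshold of stage~1 must eventually be reached (or flushed by a timeout) for $t$ to be packed into a block at all is a legitimate subtlety that the paper's own one-line ``eventually proposed'' claim glosses over, and your explicit use of Lemma~\ref{theorem1} to get identical commits at all correct nodes is sound.
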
	
\begin{proof}
	As long as a correct \solution nodes receives a transaction, we know that the transaction is eventually proposed by line~\ref{line:propose}. 
	The proof follows from the termination of the consensus algorithm~\cite{BGK21} and the fact that \solution keeps spawning new consensus instances as long as correct \solution nodes have pending transactions.
\end{proof}

\bibliographystyle{ACM-Reference-Format} 
\bibliography{reference} 


\begin{thebibliography}{41}


\ifx \showCODEN    \undefined \def \showCODEN     #1{\unskip}     \fi
\ifx \showDOI      \undefined \def \showDOI       #1{#1}\fi
\ifx \showISBNx    \undefined \def \showISBNx     #1{\unskip}     \fi
\ifx \showISBNxiii \undefined \def \showISBNxiii  #1{\unskip}     \fi
\ifx \showISSN     \undefined \def \showISSN      #1{\unskip}     \fi
\ifx \showLCCN     \undefined \def \showLCCN      #1{\unskip}     \fi
\ifx \shownote     \undefined \def \shownote      #1{#1}          \fi
\ifx \showarticletitle \undefined \def \showarticletitle #1{#1}   \fi
\ifx \showURL      \undefined \def \showURL       {\relax}        \fi
\providecommand\bibfield[2]{#2}
\providecommand\bibinfo[2]{#2}
\providecommand\natexlab[1]{#1}
\providecommand\showeprint[2][]{arXiv:#2}

\bibitem[Dia(2020)]%
        {DiabloLink}
 \bibinfo{year}{2020}\natexlab{}.
\newblock \bibinfo{title}{Diablo: Distributed Analytical BLOckchain Benchmark
  Framework}.
\newblock
\newblock
\urldef\tempurl%
\url{https://github.com/NatoliChris/diablo-benchmark}
\showURL{%
\tempurl}
\newblock
\shownote{Accessed: 2022-04-12}.


\bibitem[ewa(2020)]%
        {ewasm}
 \bibinfo{year}{2020}\natexlab{}.
\newblock \bibinfo{title}{Ethereum flavored WebAssembly (ewasm)}.
\newblock
\newblock
\urldef\tempurl%
\url{https://github.com/ewasm/design}
\showURL{%
\tempurl}


\bibitem[eWA(2020)]%
        {eWASMdoc}
 \bibinfo{year}{2020}\natexlab{}.
\newblock \bibinfo{title}{Ethereum WebAssembly (ewasm)}.
\newblock
\newblock
\urldef\tempurl%
\url{https://ewasm.readthedocs.io/en/mkdocs/}
\showURL{%
\tempurl}
\newblock
\shownote{Accessed: 2022-05-12}.


\bibitem[Sol(2020)]%
        {SolanaWeak}
 \bibinfo{year}{2020}\natexlab{}.
\newblock \bibinfo{title}{How I broke Solana's State Hash}.
\newblock
\newblock
\urldef\tempurl%
\url{https://reading.supply/@alexatnear/how-i-broke-solanas-state-hash-6r51RR}
\showURL{%
\tempurl}
\newblock
\shownote{Accessed: 2022-05-10}.


\bibitem[get(2022)]%
        {gethstats}
 \bibinfo{year}{2022}\natexlab{}.
\newblock \bibinfo{title}{Ethereum Mainnet Statistics}.
\newblock
\newblock
\urldef\tempurl%
\url{https://www.ethernodes.org/}
\showURL{%
\tempurl}


\bibitem[Amiri et~al\mbox{.}(2019)]%
        {8885058}
\bibfield{author}{\bibinfo{person}{Mohammad~Javad Amiri},
  \bibinfo{person}{Divyakant Agrawal}, {and} \bibinfo{person}{Amr El~Abbadi}.}
  \bibinfo{year}{2019}\natexlab{}.
\newblock \showarticletitle{ParBlockchain: Leveraging Transaction Parallelism
  in Permissioned Blockchain Systems}. In \bibinfo{booktitle}{\emph{2019 IEEE
  39th International Conference on Distributed Computing Systems (ICDCS)}}.
  \bibinfo{pages}{1337--1347}.
\newblock
\urldef\tempurl%
\url{https://doi.org/10.1109/ICDCS.2019.00134}
\showDOI{\tempurl}


\bibitem[Anjana et~al\mbox{.}(2020)]%
        {anjana2020efficient}
\bibfield{author}{\bibinfo{person}{Parwat~Singh Anjana}, \bibinfo{person}{Hagit
  Attiya}, \bibinfo{person}{Sweta Kumari}, \bibinfo{person}{Sathya Peri}, {and}
  \bibinfo{person}{Archit Somani}.} \bibinfo{year}{2020}\natexlab{}.
\newblock \showarticletitle{Efficient concurrent execution of smart contracts
  in blockchains using object-based transactional memory}. In
  \bibinfo{booktitle}{\emph{International Conference on Networked Systems}}.
  Springer, \bibinfo{pages}{77--93}.
\newblock


\bibitem[Ben-Or et~al\mbox{.}(1993)]%
        {BCG93}
\bibfield{author}{\bibinfo{person}{Michael Ben-Or}, \bibinfo{person}{Ran
  Canetti}, {and} \bibinfo{person}{Oded Goldreich}.}
  \bibinfo{year}{1993}\natexlab{}.
\newblock \showarticletitle{Asynchronous Secure Computation}. In
  \bibinfo{booktitle}{\emph{ACM STOC}}. \bibinfo{pages}{52--61}.
\newblock


\bibitem[Ben-Or et~al\mbox{.}(1994)]%
        {BKR94}
\bibfield{author}{\bibinfo{person}{Michael Ben-Or}, \bibinfo{person}{Boaz
  Kelmer}, {and} \bibinfo{person}{Tal Rabin}.} \bibinfo{year}{1994}\natexlab{}.
\newblock \showarticletitle{Asynchronous Secure Computations with Optimal
  Resilience (Extended Abstract)}. In \bibinfo{booktitle}{\emph{ACM PODC}}.
  \bibinfo{pages}{183--192}.
\newblock


\bibitem[Benoit et~al\mbox{.}(2021)]%
        {BGG21}
\bibfield{author}{\bibinfo{person}{Harold Benoit}, \bibinfo{person}{Vincent
  Gramoli}, \bibinfo{person}{Rachid Guerraoui}, {and}
  \bibinfo{person}{Christopher Natoli}.} \bibinfo{year}{2021}\natexlab{}.
\newblock \bibinfo{booktitle}{\emph{Diablo: A Distributed Analytical Blockchain
  Benchmark Framework Focusing on Real-World Workloads}}.
\newblock \bibinfo{type}{{T}echnical {R}eport} 285731.
  \bibinfo{institution}{EPFL}.
\newblock


\bibitem[Bertrand et~al\mbox{.}(2021)]%
        {BGK21}
\bibfield{author}{\bibinfo{person}{Nathalie Bertrand}, \bibinfo{person}{Vincent
  Gramoli}, \bibinfo{person}{Igor Konnov}, \bibinfo{person}{Marijana Lazic},
  \bibinfo{person}{Pierre Tholoniat}, {and} \bibinfo{person}{Josef Widder}.}
  \bibinfo{year}{2021}\natexlab{}.
\newblock \bibinfo{booktitle}{\emph{Compositional Verification of Byzantine
  Consensus}}.
\newblock \bibinfo{type}{{T}echnical {R}eport} hal-03158911.
  \bibinfo{institution}{HAL}.
\newblock


\bibitem[Bracha(1987)]%
        {B87}
\bibfield{author}{\bibinfo{person}{Gabriel Bracha}.}
  \bibinfo{year}{1987}\natexlab{}.
\newblock \showarticletitle{Asynchronous {B}yzantine Agreement Protocols}.
\newblock \bibinfo{journal}{\emph{Inf. Comput.}} \bibinfo{volume}{75},
  \bibinfo{number}{2} (\bibinfo{year}{1987}), \bibinfo{pages}{130--143}.
\newblock


\bibitem[Cason et~al\mbox{.}(2021)]%
        {CMMP21}
\bibfield{author}{\bibinfo{person}{Daniel Cason}, \bibinfo{person}{Nenad
  Milosevic}, \bibinfo{person}{Zarko Milosevic}, {and}
  \bibinfo{person}{Fernando Pedone}.} \bibinfo{year}{2021}\natexlab{}.
\newblock \showarticletitle{Gossip consensus}. In
  \bibinfo{booktitle}{\emph{Proceedings of the 22nd International Middleware
  Conference (Middleware)}}, \bibfield{editor}{\bibinfo{person}{Kaiwen Zhang},
  \bibinfo{person}{Abdelouahed Gherbi}, \bibinfo{person}{Nalini
  Venkatasubramanian}, {and} \bibinfo{person}{Lu{\'{\i}}s Veiga}} (Eds.).
  \bibinfo{publisher}{{ACM}}, \bibinfo{pages}{198--209}.
\newblock


\bibitem[Chan and Shi(2020)]%
        {CS20}
\bibfield{author}{\bibinfo{person}{Benjamin~Y. Chan} {and}
  \bibinfo{person}{Elaine Shi}.} \bibinfo{year}{2020}\natexlab{}.
\newblock \showarticletitle{Streamlet: Textbook Streamlined Blockchains}. In
  \bibinfo{booktitle}{\emph{Proceedings of the 2nd ACM Conference on Advances
  in Financial Technologies}}. \bibinfo{pages}{1--11}.
\newblock


\bibitem[Chase(2019)]%
        {qwp}
\bibfield{author}{\bibinfo{person}{JPMorgan Chase}.}
  \bibinfo{year}{2019}\natexlab{}.
\newblock \bibinfo{title}{Quorum Whitepaper}.
\newblock
\newblock
\urldef\tempurl%
\url{https://github.com/ConsenSys/quorum/blob/master/docs/Quorum\%20Whitepaper\%20v0.2.pdf}
\showURL{%
\tempurl}


\bibitem[Crain et~al\mbox{.}(2018)]%
        {CGLR18}
\bibfield{author}{\bibinfo{person}{Tyler Crain}, \bibinfo{person}{Vincent
  Gramoli}, \bibinfo{person}{Mikel Larrea}, {and} \bibinfo{person}{Michel
  Raynal}.} \bibinfo{year}{2018}\natexlab{}.
\newblock \showarticletitle{{DBFT:} Efficient Leaderless {B}yzantine Consensus
  and its Application to Blockchains}. In \bibinfo{booktitle}{\emph{Proc. 17th
  {IEEE} Int. Symp. Netw. Comp. and Appl (NCA)}}. \bibinfo{pages}{1--8}.
\newblock


\bibitem[Crain et~al\mbox{.}(2021)]%
        {CNG21}
\bibfield{author}{\bibinfo{person}{Tyler Crain}, \bibinfo{person}{Christopher
  Natoli}, {and} \bibinfo{person}{Vincent Gramoli}.}
  \bibinfo{year}{2021}\natexlab{}.
\newblock \showarticletitle{Red Belly: a Secure, Fair and Scalable Open
  Blockchain}. In \bibinfo{booktitle}{\emph{IEEE S\&P}}.
  \bibinfo{pages}{1501--1518}.
\newblock
\urldef\tempurl%
\url{https://www.computer.org/csdl/pds/api/csdl/proceedings/download-article/1t0x9nljvwI/pdf}
\showURL{%
\tempurl}


\bibitem[Diem(2019)]%
        {libra21}
\bibfield{author}{\bibinfo{person}{Diem}.} \bibinfo{year}{2019}\natexlab{}.
\newblock \bibinfo{title}{Accepting a transaction}.
\newblock
\newblock
\urldef\tempurl%
\url{https://developers.diem.com/docs/core/life-of-a-transaction/}
\showURL{%
\tempurl}


\bibitem[Dinh et~al\mbox{.}(2017)]%
        {DWC17}
\bibfield{author}{\bibinfo{person}{Tien Tuan~Anh Dinh}, \bibinfo{person}{Ji
  Wang}, \bibinfo{person}{Gang Chen}, \bibinfo{person}{Rui Liu},
  \bibinfo{person}{Beng~Chin Ooi}, {and} \bibinfo{person}{Kian-Lee Tan}.}
  \bibinfo{year}{2017}\natexlab{}.
\newblock \showarticletitle{BLOCKBENCH: A Framework for Analyzing Private
  Blockchains}. In \bibinfo{booktitle}{\emph{SIGMOD}}.
  \bibinfo{pages}{1085--1100}.
\newblock
\showISBNx{9781450341974}


\bibitem[Dwork et~al\mbox{.}(1988)]%
        {DLS88}
\bibfield{author}{\bibinfo{person}{C. Dwork}, \bibinfo{person}{N. Lynch}, {and}
  \bibinfo{person}{L. Stockmeyer}.} \bibinfo{year}{1988}\natexlab{}.
\newblock \showarticletitle{Consensus in the Presence of Partial Synchrony}.
\newblock \bibinfo{journal}{\emph{J. ACM}} \bibinfo{volume}{35},
  \bibinfo{number}{2} (\bibinfo{year}{1988}), \bibinfo{pages}{pp.288--323}.
\newblock


\bibitem[et~al.(2020)]%
        {Ams20}
\bibfield{author}{\bibinfo{person}{Amsden et al.}}
  \bibinfo{year}{2020}\natexlab{}.
\newblock \bibinfo{title}{The Libra Blockchain}.
\newblock
\newblock
\newblock
\shownote{Accessed on 2022-04-27,
  \url{https://diem-developers-components.netlify.app/papers/the-diem-blockchain/2020-05-26.pdf}}.


\bibitem[Garay et~al\mbox{.}(2015)]%
        {GKL15}
\bibfield{author}{\bibinfo{person}{Juan~A. Garay}, \bibinfo{person}{Aggelos
  Kiayias}, {and} \bibinfo{person}{Nikos Leonardos}.}
  \bibinfo{year}{2015}\natexlab{}.
\newblock \showarticletitle{The Bitcoin Backbone Protocol: Analysis and
  Applications}. In \bibinfo{booktitle}{\emph{34th Annu. Int. Conf. the Theory
  and Applications of Crypto. Techniques}}. \bibinfo{pages}{281--310}.
\newblock


\bibitem[Gelashvili et~al\mbox{.}(2022)]%
        {https://doi.org/10.48550/arxiv.2203.06871}
\bibfield{author}{\bibinfo{person}{Rati Gelashvili}, \bibinfo{person}{Alexander
  Spiegelman}, \bibinfo{person}{Zhuolun Xiang}, \bibinfo{person}{George
  Danezis}, \bibinfo{person}{Zekun Li}, \bibinfo{person}{Yu Xia},
  \bibinfo{person}{Runtian Zhou}, {and} \bibinfo{person}{Dahlia Malkhi}.}
  \bibinfo{year}{2022}\natexlab{}.
\newblock \bibinfo{title}{Block-STM: Scaling Blockchain Execution by Turning
  Ordering Curse to a Performance Blessing}.
\newblock
\newblock
\urldef\tempurl%
\url{https://doi.org/10.48550/ARXIV.2203.06871}
\showDOI{\tempurl}


\bibitem[Gilad et~al\mbox{.}(2017)]%
        {GHM17}
\bibfield{author}{\bibinfo{person}{Yossi Gilad}, \bibinfo{person}{Rotem Hemo},
  \bibinfo{person}{Silvio Micali}, \bibinfo{person}{Georgios Vlachos}, {and}
  \bibinfo{person}{Nickolai Zeldovich}.} \bibinfo{year}{2017}\natexlab{}.
\newblock \showarticletitle{Algorand: Scaling {B}yzantine Agreements for
  Cryptocurrencies}. In \bibinfo{booktitle}{\emph{Proc. 26th Symp. Operating
  Syst. Principles}}. \bibinfo{pages}{51--68}.
\newblock
\showISBNx{978-1-4503-5085-3}


\bibitem[Isaac and Frenkel(2018)]%
        {Fac18}
\bibfield{author}{\bibinfo{person}{Mike Isaac} {and} \bibinfo{person}{Sheera
  Frenkel}.} \bibinfo{year}{2018}\natexlab{}.
\newblock \bibinfo{title}{Facebook Security Breach Exposes Accounts of 50
  Million Users}.
\newblock
\newblock
\urldef\tempurl%
\url{https://www.nytimes.com/2018/09/28/technology/facebook-hack-data-breach.html}
\showURL{%
\tempurl}
\newblock
\shownote{Accessed: 2020-11-14}.


\bibitem[Kim et~al\mbox{.}(2021)]%
        {273865}
\bibfield{author}{\bibinfo{person}{Yeonsoo Kim}, \bibinfo{person}{Seongho
  Jeong}, \bibinfo{person}{Kamil Jezek}, \bibinfo{person}{Bernd Burgstaller},
  {and} \bibinfo{person}{Bernhard Scholz}.} \bibinfo{year}{2021}\natexlab{}.
\newblock \showarticletitle{An {Off-The-Chain} Execution Environment for
  Scalable Testing and Profiling of Smart Contracts}. In
  \bibinfo{booktitle}{\emph{2021 USENIX Annual Technical Conference (USENIX ATC
  21)}}. \bibinfo{publisher}{USENIX Association}, \bibinfo{pages}{565--579}.
\newblock
\showISBNx{978-1-939133-23-6}
\urldef\tempurl%
\url{https://www.usenix.org/conference/atc21/presentation/kim-yeonsoo}
\showURL{%
\tempurl}


\bibitem[Mansour et~al\mbox{.}(2016)]%
        {MSH16}
\bibfield{author}{\bibinfo{person}{Essam Mansour}, \bibinfo{person}{Andrei~Vlad
  Sambra}, \bibinfo{person}{Sandro Hawke}, \bibinfo{person}{Maged Zereba},
  \bibinfo{person}{Sarven Capadisli}, \bibinfo{person}{Abdurrahman Ghanem},
  \bibinfo{person}{Ashraf Aboulnaga}, {and} \bibinfo{person}{Tim
  Berners{-}Lee}.} \bibinfo{year}{2016}\natexlab{}.
\newblock \showarticletitle{A Demonstration of the Solid Platform for Social
  Web Applications}. In \bibinfo{booktitle}{\emph{Proceedings of the 25th
  International Conference on World Wide Web, {WWW} 2016, Montreal, Canada,
  April 11-15, 2016, Companion Volume}}. \bibinfo{pages}{223--226}.
\newblock


\bibitem[Nakamoto(2008)]%
        {Nak08}
\bibfield{author}{\bibinfo{person}{Satoshi Nakamoto}.}
  \bibinfo{year}{2008}\natexlab{}.
\newblock \bibinfo{title}{Bitcoin: a peer-to-peer electronic cash system}.
\newblock
\newblock


\bibitem[Neiheiser et~al\mbox{.}(2021)]%
        {RML21}
\bibfield{author}{\bibinfo{person}{Ray Neiheiser}, \bibinfo{person}{Miguel
  Matos}, {and} \bibinfo{person}{Lu\'{\i}s Rodrigues}.}
  \bibinfo{year}{2021}\natexlab{}.
\newblock \showarticletitle{Kauri: Scalable BFT Consensus with Pipelined
  Tree-Based Dissemination and Aggregation}. In
  \bibinfo{booktitle}{\emph{Proceedings of the ACM SIGOPS 28th Symposium on
  Operating Systems Principles}} (Virtual Event, Germany)
  \emph{(\bibinfo{series}{SOSP '21})}. \bibinfo{pages}{35–48}.
\newblock


\bibitem[Ponnapalli et~al\mbox{.}(2021)]%
        {ponnapalli2021rainblock}
\bibfield{author}{\bibinfo{person}{Soujanya Ponnapalli},
  \bibinfo{person}{Aashaka Shah}, \bibinfo{person}{Souvik Banerjee},
  \bibinfo{person}{Dahlia Malkhi}, \bibinfo{person}{Amy Tai},
  \bibinfo{person}{Vijay Chidambaram}, {and} \bibinfo{person}{Michael Wei}.}
  \bibinfo{year}{2021}\natexlab{}.
\newblock \showarticletitle{$\{$RainBlock$\}$: Faster Transaction Processing in
  Public Blockchains}. In \bibinfo{booktitle}{\emph{2021 USENIX Annual
  Technical Conference (USENIX ATC 21)}}. \bibinfo{pages}{333--347}.
\newblock


\bibitem[Prassl(2018)]%
        {Pra18}
\bibfield{author}{\bibinfo{person}{Jeremias Prassl}.}
  \bibinfo{year}{2018}\natexlab{}.
\newblock \bibinfo{booktitle}{\emph{Humans as a Service: The Promise and Perils
  of Work in the Gig Economy}}.
\newblock \bibinfo{publisher}{Oxford Press}.
\newblock
\showISBNx{9780198797012}
\urldef\tempurl%
\url{https://doi.org/10.1093/oso/9780198797012.001.0001}
\showDOI{\tempurl}


\bibitem[Rocket(2018)]%
        {TR18}
\bibfield{author}{\bibinfo{person}{Team Rocket}.}
  \bibinfo{year}{2018}\natexlab{}.
\newblock \bibinfo{booktitle}{\emph{Snowflake to Avalanche: A Novel Metastable
  Consensus Protocol Family for Cryptocurrencies}}.
\newblock \bibinfo{type}{{T}echnical {R}eport}.
\newblock
\newblock
\shownote{Accessed: 2021-12-01}.


\bibitem[Rouhani and Deters(2017)]%
        {8342866}
\bibfield{author}{\bibinfo{person}{Sara Rouhani} {and} \bibinfo{person}{Ralph
  Deters}.} \bibinfo{year}{2017}\natexlab{}.
\newblock \showarticletitle{Performance analysis of ethereum transactions in
  private blockchain}. In \bibinfo{booktitle}{\emph{2017 8th IEEE International
  Conference on Software Engineering and Service Science (ICSESS)}}.
  \bibinfo{pages}{70--74}.
\newblock
\urldef\tempurl%
\url{https://doi.org/10.1109/ICSESS.2017.8342866}
\showDOI{\tempurl}


\bibitem[Saraph and Herlihy(2019)]%
        {EmpStudy}
\bibfield{author}{\bibinfo{person}{Vikram Saraph} {and}
  \bibinfo{person}{Maurice Herlihy}.} \bibinfo{year}{2019}\natexlab{}.
\newblock \bibinfo{title}{An Empirical Study of Speculative Concurrency in
  Ethereum Smart Contracts}.
\newblock
\newblock
\urldef\tempurl%
\url{https://doi.org/10.48550/ARXIV.1901.01376}
\showDOI{\tempurl}


\bibitem[Tholoniat and Gramoli(2019)]%
        {TG19}
\bibfield{author}{\bibinfo{person}{Pierre Tholoniat} {and}
  \bibinfo{person}{Vincent Gramoli}.} \bibinfo{year}{2019}\natexlab{}.
\newblock \showarticletitle{Formally Verifying Blockchain {B}yzantine Fault
  Tolerance}. In \bibinfo{booktitle}{\emph{The 6th Workshop on Formal Reasoning
  in Distributed Algorithms (FRIDA'19)}}.
\newblock
\newblock
\shownote{Available at \url{https://arxiv.org/pdf/1909.07453.pdf}}.


\bibitem[Wood et~al\mbox{.}(2014)]%
        {wood2014ethereum}
\bibfield{author}{\bibinfo{person}{Gavin Wood} {et~al\mbox{.}}}
  \bibinfo{year}{2014}\natexlab{}.
\newblock \showarticletitle{Ethereum: A secure decentralised generalised
  transaction ledger}.
\newblock \bibinfo{journal}{\emph{Ethereum project yellow paper}}
  \bibinfo{volume}{151}, \bibinfo{number}{2014} (\bibinfo{year}{2014}),
  \bibinfo{pages}{1--32}.
\newblock


\bibitem[Yakovenko(2018)]%
        {yakovenko2018solana}
\bibfield{author}{\bibinfo{person}{Anatoly Yakovenko}.}
  \bibinfo{year}{2018}\natexlab{}.
\newblock \showarticletitle{Solana: A new architecture for a high performance
  blockchain v0. 8.13}.
\newblock \bibinfo{journal}{\emph{Whitepaper}} (\bibinfo{year}{2018}).
\newblock


\bibitem[Yakovenko(2019)]%
        {SolanaSealevel}
\bibfield{author}{\bibinfo{person}{Anatoly Yakovenko}.}
  \bibinfo{year}{2019}\natexlab{}.
\newblock \bibinfo{title}{SEALEVEL}.
\newblock
\newblock
\urldef\tempurl%
\url{https://solana.com/news/sealevel---parallel-processing-thousands-of-smart-contracts}
\showURL{%
\tempurl}


\bibitem[Yakovenko(2021)]%
        {Yak21}
\bibfield{author}{\bibinfo{person}{Anatoly Yakovenko}.}
  \bibinfo{year}{2021}\natexlab{}.
\newblock \bibinfo{title}{Solana: A new architecture for a high performance
  blockchain v0.8.13}.
\newblock
\newblock
\newblock
\shownote{Accessed: 2021-12-06,
  \url{https://solana.com/solana-whitepaper.pdf}}.


\bibitem[Yin et~al\mbox{.}(2019)]%
        {YMR19}
\bibfield{author}{\bibinfo{person}{Maofan Yin}, \bibinfo{person}{Dahlia
  Malkhi}, \bibinfo{person}{Michael~K. Reiter}, \bibinfo{person}{Guy~Golan
  Gueta}, {and} \bibinfo{person}{Ittai Abraham}.}
  \bibinfo{year}{2019}\natexlab{}.
\newblock \showarticletitle{Hot{S}tuff: {BFT} Consensus with Linearity and
  Responsiveness}. In \bibinfo{booktitle}{\emph{Proceedings of the 2019 ACM
  Symposium on Principles of Distributed Computing}}.
  \bibinfo{publisher}{Association for Computing Machinery},
  \bibinfo{address}{New York, NY, USA}.
\newblock
\showISBNx{9781450362177}
\urldef\tempurl%
\url{https://doi.org/10.1145/3293611.3331591}
\showDOI{\tempurl}


\bibitem[Zheng et~al\mbox{.}(2020)]%
        {VmMatters}
\bibfield{author}{\bibinfo{person}{Shuyu Zheng}, \bibinfo{person}{Haoyu Wang},
  \bibinfo{person}{Lei Wu}, \bibinfo{person}{Gang Huang}, {and}
  \bibinfo{person}{Xuanzhe Liu}.} \bibinfo{year}{2020}\natexlab{}.
\newblock \bibinfo{title}{VM Matters: A Comparison of WASM VMs and EVMs in the
  Performance of Blockchain Smart Contracts}.
\newblock
\newblock
\urldef\tempurl%
\url{https://doi.org/10.48550/ARXIV.2012.01032}
\showDOI{\tempurl}


\end{thebibliography}

\end{document}